\documentclass[11pt]{article}

\usepackage[T1]{fontenc}
\usepackage{lmodern}
\usepackage[margin=1.05in]{geometry}
\usepackage{amsmath,amssymb,amsthm,mathtools,mathrsfs}
\usepackage{graphicx}
\usepackage{microtype}
\usepackage{xcolor}
\usepackage{float}
\usepackage[colorlinks=true,linkcolor=blue!55!black,citecolor=blue!55!black,urlcolor=blue!55!black]{hyperref}

\newtheorem{theorem}{Theorem}[section]
\newtheorem{lemma}[theorem]{Lemma}
\newtheorem{proposition}[theorem]{Proposition}

\theoremstyle{remark}
\newtheorem{remark}[theorem]{Remark}
\numberwithin{equation}{section}
\newtheorem{definition}{Definition}[section]

\newcommand{\C}{\mathbb C}
\newcommand{\R}{\mathbb R}
\newcommand{\Z}{\mathbb Z}
\newcommand{\calH}{\mathcal H}
\newcommand{\ind}{\operatorname{ind}}
\newcommand{\eps}{\varepsilon}

\newcommand{\ip}[2]{\langle #1,#2\rangle}

\graphicspath{{figures/}{outputs/pdf/figures/}}

\title{Exactly flat bands beyond the chiral limit in Bistritzer--MacDonald Hamiltonians}
\author{Zhen Huang\textsuperscript{1,2,3}\quad
Kevin D. Stubbs\textsuperscript{1,4}\\[0.5em]
\small \textsuperscript{1}Department of Mathematics, University of California,
Berkeley, CA 94720, USA\\
\small \textsuperscript{2} Center for Computational Mathematics, Flatiron Institute, New York, NY 10010, USA\\
\small \textsuperscript{3} Center for Computational Quantum Physics, Flatiron Institute, New York, NY 10010, USA\\
\small \textsuperscript{4} Department of Mathematics, University of Minnesota, Minneapolis, MN 55455, USA}
\date{\today}

\begin{document}
\maketitle

\begin{abstract}
The chiral limit of the Bistritzer--MacDonald model for twisted bilayer
graphene has exactly flat bands at magic angles.  We show that exact flatness
can persist beyond the chiral limit with nonchiral tunnelling potentials in the usual symmetry class.  We construct a family of nonchiral tunnelling
potentials for which the Hamiltonian has at least two zero-energy states at
every Bloch momentum and for every real value of the nonchiral coupling.  This gives a family of counterexamples to Open Problem~3 of
Zworski's survey \cite{ZworskiSurvey}. Conversely, we prove that every admissible potential with this property belongs to this family, thereby obtaining a complete characterization.
By contrast, we show that the standard BM model, namely, BM Hamiltonians with first harmonic
tunnelling potentials, does not admit exactly flat bands at the magic angle except possibly at a
discrete set of nonchiral coupling strengths, with no finite accumulation
point.  To prove this, we introduce an explicit invertibility criterion.  { The
criterion requires the projected perturbation to have a nonzero scalar coefficient
at some Bloch momentum, obstructing exact flatness for general nonchiral tunnelling.}  A local Taylor expansion argument verifies the
criterion for the standard BM potentials.  For the family constructed above,
however, the projected perturbation vanishes at every Bloch momentum, so the invertibility criterion fails.
\end{abstract}  
\section{Introduction}

Introduced in their seminal paper \cite{BistritzerMacDonald}, the Bistritzer--MacDonald (BM) model describes the low-energy electronic states twisted bilayer graphene (TBG), which is created by stacking two layers of graphene on top of each other a small relative twist between them.
At Bloch momentum \(k\), the BM model is described by a Hamiltonian which takes the form
\[
 H_k(\alpha,\lambda)=
 \begin{pmatrix}
  \lambda W_V & D(\alpha)^* + \overline{k} \\
  D(\alpha) + k&\lambda W_V
 \end{pmatrix},
\]
where \(D(\alpha)\)  and \(W_V\) are:
\begin{equation}\label{eq:D-and-W}
 D(\alpha)=
 \begin{pmatrix}
  2D_{\bar z}&\alpha U(z)\\
  \alpha U(-z)&2D_{\bar z}
 \end{pmatrix},
 \quad
 W_V(z)=
 \begin{pmatrix}
  0&V(z)\\ V(-z)&0 
 \end{pmatrix},
 \quad 
 U, V \in C^\infty(\mathbb{C})
\end{equation}
Here $D_{\bar z}=-\frac{1}{2}\mathrm{i} (\partial_{x} - \mathrm{i} \partial_y) $ is the Dirac operator, and \(z=x+\mathrm{i}y\) is a complex coordinate on the plane.  (see Section \ref{sec:model} for precise definitions).
The potentials $U, V \in C^\infty(\mathbb{R})$ model tunneling between the two layers of graphene; $U$ corresponds to same sublattice ($AA$ and $BB$) tunneling whereas $V$ corresponds to different sublattice ($AB$ and $BA$) tunneling.
If the two layers of graphene have relative twist $\theta$ then $\alpha \propto \theta^{-1}$, $\lambda$ measures the relative strength of the same sublattice tunneling and the different sublattice tunneling.

The key observation of Bistrizter--MacDonald is that when $\lambda = 1$, there exist choices of $\alpha$, referred to as ``magic'' angles, where the Fermi velocity of the low energy bands become nearly flat. 
Following up on this work, Tarnopolsky, Kruchkov, and Vishwanath \cite{TarnopolskyKruchkovVishwanath} showed that that when $\lambda = 0$, called the chiral limit, there exist values of $\alpha$ so that bands of the BM model are provably exactly flat.
This result was expanded upon by Watson and Luskin \cite{WatsonLuskin} who gave a rigorous computer-assisted proof that the first real magic angle exists as well as Becker, Embree, Wittsten and Zworski \cite{Becker2021SpectralCharacterizationMagic,Becker2022MathematicsMagicAngles} who characterized the set \(\mathcal A\) of magic parameters in terms of eigenvalues of a Birman--Schwinger operator.
The proof of \cite{Becker2021SpectralCharacterizationMagic,Becker2022MathematicsMagicAngles}, in particular showed that for \textit{any} choice of potential $U$ satisfying the appropriate symmetries then the construction of flat bands from \cite{TarnopolskyKruchkovVishwanath} easily generalizes.

Unfortunately, \textit{ab initio} calculations estimate that $\lambda \approx 0.5 - 0.9$ the chiral limit ($\lambda = 0$) is an idealization of the real-world physical system \cite{TarnopolskyKruchkovVishwanath,NamKoshino} and it has been widely observed numerically that the central bands acquire a nonzero width once \(\lambda \neq 0\) \cite{BistritzerMacDonald,NamKoshino,BeckerZworski}.
These numerical observatoin motivated the following \cite[Open Problem 3]{ZworskiSurvey}:
\begin{quote}
    Show that the [Bistritzer--Macdonald] Hamiltonian with $U$ and $V \not\equiv 0$ satisfying [symmetry constraints], cannot have flat bands when $\lambda \neq 0$. (Or give a counterexample to this claim.)
\end{quote}

Our first result provides a counterexample to this Open Problem.  Let \(U\) be a nonzero analytic
tunneling potential satisfying the symmetries of Section~\ref{sec:model}, let
\(\alpha_*\) be a simple real magic parameter for \(U\), and let \(p\) denote the first
component of a chiral zero mode (Lemma~\ref{lem:chiral-zero-mode}).  Set
\begin{equation}\label{eq:V0}
 V_0(z)=p(z)\,\overline{p(-z)},\quad V[f](z)=f(z)V_0(z),
\end{equation}
where \(f(z)\) is any real measurable function satisfying the
required symmetries and such that \(V[f]\) is essentially bounded.
This notation gives \(V[1]=V_0\), whereas \(V[0]\equiv0\).
Theorem~\ref{thm:exact} states that for every such \(V[f]\), every momentum \(k\) and
every \(\lambda \in \mathbb{R}\),
\begin{equation}\label{eq:exact}
 \dim_{\mathbb C}\ker H_k(\alpha_*,\lambda)\geq2 ;
\end{equation}
exact zero-energy flat bands therefore persist for arbitrarily large nonchiral coupling.
The mechanism is a topological one: the two chiral zero modes determine a pair of line
bundles which are preserved by the full Hamiltonian, and the restriction of the Hamiltonian to
their sum has Fredholm index two. 
Theorem~\ref{thm:characterization} shows that the
potentials \eqref{eq:V0} are the only ones with this property: every admissible \(V\)
for which \eqref{eq:exact} holds is of the form \(f V_0\) with
\(f\in\mathcal F_{\rm sym}\) real and measurable.

Our second result explains why this behavior is nonetheless exceptional, and confirms
the expectation behind Open Problem~3 for all other potentials.  At a simple magic
parameter \(\alpha_*\), let \(P_k\) denote the orthogonal projection onto the
two-dimensional space of chiral zero modes at momentum \(k\), and let
\(A_V(k):=P_k\mathbf W_VP_k\) be the compression of the nonchiral perturbation to that
space, a \(2\times2\) matrix.  We say that \(V\) satisfies the \emph{invertibility
criterion} if \(A_V(k_0)\) is invertible for some \(k_0\).
Proposition~\ref{prop:compression} shows that the zero modes at \(k_0\) then disappear
for every sufficiently small nonzero \(\lambda\); since zero-energy states persist at
\(k=\pm K\), the central bands cannot be flat.  More generally,
Theorem~\ref{thm:discrete-couplings} shows that for an admissible potential which is not
of the form \(fV_0\), exact flatness can occur only for a discrete set of real couplings
without accumulation points.  Finally, Proposition~\ref{prop:standard-BM} verifies the
invertibility criterion for the original Bistritzer--MacDonald potential
by a local Taylor expansion, so that this potential obstructs exact flatness for all but
a discrete set of nonchiral couplings.

Before proceeding with the proof, we review some of the previous works studying the existence of flat bands in TBG and related models.
Watson and Luskin \cite{WatsonLuskin} gave a rigorous computer assisted proof that the first magic angle found in \cite{TarnopolskyKruchkovVishwanath} exists.
In a pair of papers Becker, Embree, Wittsten, and Zworski \cite{Becker2021SpectralCharacterizationMagic,Becker2022MathematicsMagicAngles} showed that magic angles can be related to the eigenvalues of a Birman-Schwinger operator; these papers also showed for the anti-chiral limit $\alpha = 0$ and $\lambda \neq 0$ no flat bands exist. 
In later work, Becker, Humbert, and Zworski \cite{BHZfine, BHZintegrability,Becker2025DegenerateFlatBands} showed that the set of magic angles is infinite, proved that the number of flat bands is cannot be $0$ modulo 3, and calculated the Chern number of the associated bundle.
Closest to the present work, Becker and Zworski \cite{BeckerZworski} studied the eigenvalues of $\lambda \neq 0$ by applying perturbation theory chiral limit of the BM Hamiltonian.
Recent work by Becker, Quinn, Tao, Watson, and Yang \cite{Becker2026DiracConesMagic} has shown that there exists a measure zero set $\mathcal{S}$, described by analytic curves, so that the BM Hamiltonian with parameters $(\alpha, \lambda) \in \mathbb{R}^2 \setminus \mathcal{S}$ has a Dirac point and therefore cannot be flat.

In addition to the many works studying TBG, a number of works have studied proving the existence of flat bands in twisted graphene multilayers \cite{Li2025FlatBandsDirac,Becker2025ChiralLimitTwisted}  and for the scalar model of Galkowski and Zworski \cite{Galkowski2023AbstractFormulationFlat}, which is not equivalent to the chiral model but exhibits flat bands by a similar mechanism and is analyzed semiclassically in \cite{Dyatlov2026WKBStructureScalar}.

\subsection{Paper Organization}
This paper is organized as follows. Section~\ref{sec:model} gives the precise model and states the main results. Section~\ref{sec:construction} proves the exact flat-band construction, and Section~\ref{sec:nonflatness} proves the obstruction result and the application to the standard BM potentials.

\section*{Acknowledgments}
This work was supported in part by the Simons Targeted Grants in Mathematics and Physical Sciences on Moir\'e Materials Magic (Z.H., K.D.S.), and by the Applied Mathematics Program of the US Department of Energy (DOE) Office of Advanced Scientific Computing Research under contract number DE-AC02-05CH1123 (Z.H.).  The Flatiron Institute is a division
of the Simons Foundation. The authors thank Maciej Zworski and Lin Lin for valuable discussions and encouragement.

\section{Model and main results}\label{sec:model}

We use the standard complex-coordinate formulation of the model, identifying
a point \((x,y)\in\R^2\) with \(z=x+\mathrm{i}y\in\C\).  After
nondimensionalization, set
\[
 \omega=\mathrm{e}^{2\pi\mathrm{i}/3},\quad
 \Lambda=\Z\oplus\omega\Z,\quad X=\C/\Lambda.
\]
Here \(\Lambda\) is the moir\'e lattice, and \(X\) is the associated
real-space torus, also known as the moir\'e unit cell.
\textcolor{black}{We write \([z]:=z+\Lambda\in X\) for the equivalence class
of \(z\in\C\) modulo \(\Lambda\).}  Multiplication by \(\omega\) is
rotation through \(120^\circ\).  We use
\(K=4\pi/3\), identified with the vector \((4\pi/3,0)\).  For \(z,w\in\C\), write
\(\ip{z}{w}=\operatorname{Re}(z\bar w)\), and use
\[
  D_{\bar z}=-\mathrm{i}\partial_{\bar z},\quad
  D_z=-\mathrm{i}\partial_z,
  \quad
  \partial_{\bar z}=\tfrac12(\partial_x+\mathrm{i}\partial_y),\quad
  \partial_z=\tfrac12(\partial_x-\mathrm{i}\partial_y).
\]

For a chiral tunnelling potential \(U\) and a nonchiral tunnelling potential
\(V\), as in \eqref{eq:D-and-W}, define
\[
 D(\alpha)=
 \begin{pmatrix}
  2D_{\bar z}&\alpha U(z)\\
  \alpha U(-z)&2D_{\bar z}
 \end{pmatrix},
 \quad
 W_V(z)=
 \begin{pmatrix}
  0&V(z)\\ V(-z)&0
 \end{pmatrix}.
\]
Here \(D(\alpha)\) contains the chiral tunnelling channel, whereas \(W_V\) contains the nonchiral one.  
The full real-space Bistritzer--MacDonald Hamiltonian is
\begin{equation}\label{eq:H-BM}
 H_{\rm BM}(\alpha,\lambda)
 :=\begin{pmatrix}
  \lambda W_V&D(\alpha)^*\\
  D(\alpha)&\lambda W_V
 \end{pmatrix}:
 H^1(\C;\C^4)\longrightarrow L^2(\C;\C^4).
\end{equation}
Here \(D(\alpha)^*\) is the  adjoint of \(D(\alpha)\).  The off-diagonal blocks form the chiral Hamiltonian, while the diagonal term \(\lambda W_V\) is the nonchiral tunnelling term, which is artificially turned off in the chiral limit.   We will use the notation \(\mathbf W_V:=\operatorname{diag}(W_V,W_V)\) for the nonchiral part of the BM Hamiltonian.  

The tunnelling potentials $U$ and $V$ must respect symmetry under lattice translations, threefold rotation, and reflection.  To make this precise, they satisfy the following symmetry relations:
\begin{align}
 U(z+\gamma)&=\mathrm{e}^{\mathrm{i}\ip{\gamma}{K}}U(z), \quad \forall \gamma\in\Lambda&
 U(\omega z)&=\omega U(z),&
 \overline{U(\bar z)}&=-U(-z),\label{eq:Usym}\\
 V(z+\gamma)&=\mathrm{e}^{\mathrm{i}\ip{\gamma}{K}}V(z), \quad \forall \gamma\in\Lambda&
 V(\omega z)&=V(z),&
 V(\bar z)&=V(z)=\overline{V(-z)}.\label{eq:Vsym}
\end{align}
{ We call an equivariant nonchiral tunnelling potential
\(V\in L^\infty_{\rm loc}(\C)\) satisfying \eqref{eq:Vsym} \emph{admissible}.
Throughout this paper, measurable functions are identified up to equality almost
everywhere, and all identities involving them, including symmetry relations,
are understood in this sense.  Since \eqref{eq:Vsym} makes \(\lvert V\rvert\)
periodic, we equip the resulting real Banach space \(\mathcal V_{\rm adm}\)
with the norm \(\|V\|_{L^\infty(X)}\).}
A further simplification of the model, which is widely used in the literature, is to truncate the higher Fourier modes of the tunnelling potentials, and only keep the first-harmonic terms in $U$ and $V$:
\begin{equation}\label{eq:BMpotentials}
 \begin{aligned}
 U_{\rm BM}(z)&=-\frac{4\pi\mathrm{i}}{3}
 \sum_{\ell=0}^{2}\omega^\ell
 \mathrm{e}^{\mathrm{i}\ip{z}{\omega^\ell K}},\quad
 V_{\rm BM}(z)=\sum_{\ell=0}^{2}
 \mathrm{e}^{\mathrm{i}\ip{z}{\omega^\ell K}}.
 \end{aligned}
\end{equation}
Despite its wide use, such first-harmonic approximations neglect longer-range tunnelling effects, as discussed in \cite{QuinnKongLuskinWatson}. In this paper, we are interested in general $U$ and $V$ satisfying the required symmetries.

The relation \(V(-z)=\overline{V(z)}\) makes \(W_V\) Hermitian.  To describe
the translation symmetry, for \(\gamma\in\Lambda\) let
\[
 T_\gamma=
 \begin{pmatrix}
  \mathrm{e}^{-\mathrm{i}\ip{\gamma}{K}}&0\\
 0&\mathrm{e}^{\mathrm{i}\ip{\gamma}{K}}
 \end{pmatrix}.
\]
Set \(\mathbf T_\gamma=\operatorname{diag}(T_\gamma,T_\gamma)\) and define
the combined lattice translation by
\((\mathcal L_\gamma\Psi)(z)=\mathbf T_\gamma\Psi(z-\gamma)\).
The translation laws \eqref{eq:Usym}--\eqref{eq:Vsym} imply
\(H_{\rm BM}(\alpha,\lambda)\mathcal L_\gamma
=\mathcal L_\gamma H_{\rm BM}(\alpha,\lambda)\).
Hence Bloch--Floquet
decomposition applies, and it is enough to consider the resulting operators
\(H_k\).  Let \(L^2_0\) and \(H^1_0\) denote, respectively, the spaces of functions
\(f\in L^2_{\rm loc}(\C;\C^2)\) and
\(f\in H^1_{\rm loc}(\C;\C^2)\) satisfying
\begin{equation}\label{eq:twisted-boundary}
 f(z+\gamma)=T_\gamma f(z),\quad
 \gamma\in\Lambda,\quad\text{for a.e. }z\in\C.
\end{equation}
Since \(T_\gamma\) is unitary, the corresponding norms are well-defined on \(X\).  Set
\[
 \mathscr H_0=L^2_0\oplus L^2_0,\quad
 \mathscr D_0=H^1_0\oplus H^1_0.
\]
Define the reciprocal lattice by
\[
 \Lambda^*=\{G\in\C:\ip{\gamma}{G}\in2\pi\Z
 \text{ for every }\gamma\in\Lambda\}.
\]
 The Bloch Hamiltonian at momentum
\(k\), denoted as $H_k$, is given by
\begin{equation}\label{eq:H}
 H_k(\alpha,\lambda)=
 \begin{pmatrix}
  \lambda W_V&D(\alpha)^*+\bar k\\
  D(\alpha)+k&\lambda W_V
 \end{pmatrix}:\mathscr D_0\longrightarrow\mathscr H_0.
\end{equation}
Note that the operators \(H_k\) are indexed by \(k\in\C/\Lambda^*\).  For real \(\alpha\) and \(\lambda\), \(H_k(\alpha,\lambda)\) is self-adjoint with compact resolvent.  Moreover, \(H_{k+G}\) is unitarily equivalent to \(H_k\) for every \(G\in\Lambda^*\).  Thus its discrete eigenvalues $E_n(k)$, which could  be viewed as functions of $k$, define energy bands on \(\C/\Lambda^*\).  A band $E_n(k)$ is called exactly flat if its value is independent of \(k\).  Moreover, we say that the model has at least \(m\) zero-energy flat bands if
\(\dim\ker H_k\geq m\) for every \(k \in \C/\Lambda^*\).

At a magic value of \(\alpha\), the chiral operator, namely $(D(\alpha) + k)$, has a zero mode at every momentum $k$.  We call a real parameter \(\alpha_*\) \emph{simple magic} if this zero mode is unique up to multiplication by a scalar, that is, if
\[
 \dim_\C\ker_{L^2_0}(D(\alpha_*)+k)=1,\quad\forall k\in\C.
\]
This is the minimal-multiplicity condition of
\cite[Theorem~2]{BHZfine}.  For the standard potential \(U=U_{\rm BM}\), the
existence of the first real magic parameter was proved in
\cite{WatsonLuskin}, and its simplicity was proved in
\cite[Theorem~3]{BHZintegrability}.     The following lemma collects the properties of a
simple chiral zero mode used in this work.

\begin{lemma}[Structure of a simple chiral zero mode]
\label{lem:chiral-zero-mode}
Let \(U\) be a nonzero real-analytic tunnelling potential satisfying \eqref{eq:Usym}, and
let \(\alpha_*\) be a simple real magic parameter for this potential.  Then the
following statements hold.
\begin{enumerate}
\item For every \(k\in\C\), both
\(\ker(D(\alpha_*)+k)\) and
\(\ker(D(\alpha_*)^*+\bar k)\) are one-dimensional.  Consequently,
\(H_k(\alpha_*,0)\) has exactly two zero modes, and these eigenvalues are
uniformly separated from the rest of the spectrum.
\item An \(L^2\)-normalized zero mode
\(\vec u_0\in\ker D(\alpha_*)\) can be chosen in the form
\begin{equation}\label{eq:u0}
 \vec u_0(z)=
 \begin{pmatrix}p(z)\\ \eps\mathrm{i}p(-z)\end{pmatrix},
 \quad \eps\in\{+1,-1\}.
\end{equation}
Here \(p(z)\) satisfies that
\begin{equation}\label{eq:p-sym}
 p(z+\gamma)=\mathrm{e}^{-\mathrm{i}\ip{\gamma}{K}}p(z),
 \quad p(\omega z)=\omega p(z).
\end{equation}
The function \(p\) is real analytic.  Moreover, there is a constant
\(\rho\in\C\), \(|\rho|=1\), such that
\begin{equation}\label{eq:u0-reflection}
 \overline{p(-\bar z)}=\rho p(z),
 \quad
 \overline{p(\bar z)}=\rho p(-z).
\end{equation}
\item The vector \(\vec u_0\) has a unique zero on \(X\), at \([0]\), and
this zero has order one. 
Thus, on a coordinate disc centered at \(0\),
\begin{equation}\label{eq:u0-local-factor}
 \vec u_0(z)=z\vec w(z),\quad \vec w(0)\neq0,
\end{equation}
for a smooth \(\C^2\)-valued function \(\vec w\).
\end{enumerate}
\end{lemma}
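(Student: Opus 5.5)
Most of this lemma collects known facts from the Becker--Humbert--Zworski theory; I would assemble them rather than reprove everything. The plan has three parts, one per item.

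\emph{Item 1.} By the simple magic assumption, \(\dim\ker(D(\alpha_*)+k)=1\) for every \(k\). The operator \(D(\alpha_*)+k: H^1_0\to L^2_0\) is elliptic on the compact torus, and \(D(\alpha)\) is a compact perturbation of \(2D_{\bar z}\otimes I\). So it is Fredholm of index zero, since the index of \(2D_{\bar z}+k\) on twisted sections is zero. Hence \(\dim\ker(D(\alpha_*)^*+\bar k)=\dim\operatorname{coker}(D(\alpha_*)+k)=1\). Because \(H_k(\alpha_*,0)\) is off-diagonal, its kernel is the direct sum of these two kernels, so it is two-dimensional. For the uniform spectral gap, the nonzero spectrum of \(H_k(\alpha_*,0)\) is \(\pm\) the singular values of \(D(\alpha_*)+k\). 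Since \(k\mapsto H_k\) is a continuous family with compact resolvent, the kernel dimension is constant, and \(k\) ranges over the compact set \(\C/\Lambda^*\), the smallest positive singular value is bounded below uniformly in \(k\) by continuity of eigenvalues.

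\emph{Item 2.} Let \(\vec u_0\) span \(\ker_{L^2_0}D(\alpha_*)\).
\begin{itemize}
\item Analyticity: \(D(\alpha_*)\) is elliptic with real-analytic coefficients, so \(\vec u_0\) is real analytic by elliptic regularity.
\item Rotation: the rotation \(\mathcal C\Psi(z)=\Psi(\omega z)\), suitably weighted, commutes with \(D(\alpha_*)\) by \(U(\omega z)=\omega U(z)\) and \(\partial_{\bar z}(f(\omega z))=\bar\omega(\partial_{\bar z}f)(\omega z)\). It therefore preserves the one-dimensional kernel, and \(\vec u_0\) is an eigenvector. Following \cite{BHZfine} (\(\ker D(\alpha)\) lies in the \(\omega\)-eigenspace at \(k=0\)), this gives \(p(\omega z)=\omega p(z)\).
\item Twisted periodicity: \(p(z+\gamma)=e^{-i\langle\gamma,K\rangle}p(z)\) follows from the boundary condition \eqref{eq:twisted-boundary}.
\item Second component: the operator \(\mathcal E\begin{pmatrix}a\\b\end{pmatrix}(z)=\begin{pmatrix}b(-z)\\ -a(-z)\end{pmatrix}\) (the standard \(\mathscr E\) symmetry) commutes with \(D(\alpha_*)\), using \(\partial_{\bar z}[f(-z)]=-(\partial_{\bar z}f)(-z)\). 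Then \(\mathcal E^2=-1\) forces eigenvalues \(\pm i\), which gives \(\vec u_0=(p(z),\eps i p(-z))\).
\item Reflection: the antilinear map \(\Psi\mapsto\overline{\Psi(-\bar z)}\), possibly composed with a component swap, commutes with \(D(\alpha_*)\) by \(\overline{U(\bar z)}=-U(-z)\). It preserves the kernel, so the image is \(\rho\,\vec u_0\). Normalization forces \(|\rho|=1\), and reading off the first component gives \eqref{eq:u0-reflection}. Substituting \(z\mapsto -z\) gives the second identity.
\end{itemize}
I would check the sign bookkeeping in each symmetry computation but expect no difficulty.

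\emph{Item 3.} This is the main obstacle. I would invoke the theta-function argument of \cite{Becker2021SpectralCharacterizationMagic,BHZfine}. First, rotation symmetry forces \(\vec u_0(0)=0\), since \(p(0)=\omega p(0)\). Next, if \(\vec u_0\) had a second zero \(z_1\notin\Lambda\), or a zero of order at least two at \(0\), one can divide by a suitable theta function \(\theta(z-z_1)/\theta(z)\) or multiply by a meromorphic factor. Since \(D(\alpha_*)\) involves only \(\partial_{\bar z}\), multiplying a kernel element by a holomorphic function keeps it in the kernel. This produces kernel elements of \(D(\alpha_*)+k\) at a given \(k\) that are linearly independent of the one obtained from \(\vec u_0\) by theta multiplication, or produces a two-dimensional kernel, contradicting simplicity. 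This is the argument of \cite[Theorem~2]{BHZfine}, where simple magic is equivalent to \(\vec u_0\) having a single simple zero. Finally, the local factorization \eqref{eq:u0-local-factor} follows from real analyticity and \(\partial_{\bar z}\vec u_0=O(|\vec u_0|)\). By the similarity principle, the vector-valued solution is locally of the form \(z^m\) times a nonvanishing function, and order one means \(m=1\).
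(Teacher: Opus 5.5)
Your proposal is correct and follows the same route as the paper, which gives no independent proof and instead assembles the lemma from the Becker--Humbert--Zworski and Becker--Zworski theory: item 1 from \cite[Theorem~2]{BHZfine}, item 2 from \cite{BeckerZworski} plus analytic elliptic regularity, and item 3 from \cite[Theorem~3]{BHZfine} and \cite[Proposition~3.6]{BeckerZworskiDirac}; your index-zero, symmetry and theta-function sketches are the standard arguments behind those citations. Two harmless slips: \(\mathcal E\) anticommutes rather than commutes with \(D(\alpha_*)\), which still preserves the kernel; and the reflection that preserves \(L^2_0\) and commutes with \(D(\alpha_*)\) is \((a,b)\mapsto(\overline{a(-\bar z)},-\overline{b(-\bar z)})\), i.e.\ a sign flip on the second component rather than a component swap, because swapping components after \(z\mapsto-\bar z\) breaks the twisted boundary condition.
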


For item~(1), see \cite[Theorem~2]{BHZfine}.  For item~(2), see
\cite[Eqs.~(2.7)--(2.8) and Sec.~4.3]{BeckerZworski}; the real analyticity of
\(p\) follows from analytic elliptic regularity applied to
\(D(\alpha_*)\vec u_0=0\).  For item~(3), see \cite[Theorem~3]{BHZfine} and
also \cite[Proposition~3.6]{BeckerZworskiDirac}.

% \vspace{1cm}

Now we are ready to state our main results. 
We start with the following definitions:
\begin{definition}
Let $p(z)$ be as in Lemma~\ref{lem:chiral-zero-mode} and define
\begin{equation}\label{eq:V0}
 V_0(z)=p(z)\overline{p(-z)}.
\end{equation}
Let \(\mathcal F_{\rm sym}\) be the real vector space of measurable
functions \(f:\C\to\R\) such that \(fV_0\in L^\infty(X)\) and
\begin{equation}\label{eq:f-sym}
 \begin{aligned}
 f(z+\gamma)&=f(z),& f(\omega z)&=f(z),\\
 f(\bar z)&=f(z),& f(-z)&=f(z),
 \end{aligned}
 \quad \gamma\in\Lambda{ .}
\end{equation}
For \(f\in\mathcal F_{\rm sym}\), set
\begin{equation}\label{eq:Vf}
 V[f](z)=f(z)V_0(z)=f(z)p(z)\overline{p(-z)}.
\end{equation}
\end{definition}

\begin{theorem}[Exact flat bands beyond the chiral limit]
\label{thm:exact}
Let \(U\) be a nonzero real-analytic tunnelling potential satisfying
\eqref{eq:Usym}, let \(\alpha_*\) be a simple real magic parameter for this
potential, choose \(f\in\mathcal F_{\rm sym}\setminus\{0\}\), and set
\(V=V[f]\) as in \eqref{eq:Vf}.  Then
\(U,V[f]\not\equiv0\), the symmetries \eqref{eq:Usym}--\eqref{eq:Vsym} hold, and
\[\dim_\C \ker_{\mathscr H_0}H_k(\alpha_*,\lambda)\geq 2 \quad(k\in\C,\ \lambda\in\R).
\]
Thus every nonzero real nonchiral coupling strength \(\lambda\) gives at least two exact zero-energy flat bands.
\end{theorem}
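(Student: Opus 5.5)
The plan is to find a sub-bundle decomposition, built from the chiral zero mode $\vec u_0$, that the whole family $H_k(\alpha_*,\lambda)$ respects. On that piece the operator should be a Fredholm map of index two whose index does not depend on $\lambda$, and an index lower bound then gives the kernel estimate. I am not certain of the key index step; it is flagged below.

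\textbf{Step 1: preliminaries.} We first dispose of the easy claims.
\begin{itemize}
\item $U\not\equiv0$ is assumed, and $V[f]\not\equiv0$ because $f\neq0$. By Lemma~\ref{lem:chiral-zero-mode}(3), $p$ is real analytic and not identically zero, so $V_0$ vanishes only on a null set.
\item Translation law: by \eqref{eq:p-sym}, $p(z+\gamma)\overline{p(-z-\gamma)}=\mathrm{e}^{-\mathrm{i}\ip{\gamma}{K}}\mathrm{e}^{-\mathrm{i}\ip{\gamma}{K}}p(z)\overline{p(-z)}$. I still have to match this against the required factor $\mathrm{e}^{\mathrm{i}\ip{\gamma}{K}}$ using $2K\equiv -K$ modulo $\Lambda^*$.
\item Rotation law: $V_0(\omega z)=\omega\bar\omega V_0(z)$.
\item Reflection laws $V(\bar z)=V(z)=\overline{V(-z)}$: these follow from \eqref{eq:u0-reflection}, from $|\rho|=1$, and from \eqref{eq:f-sym}, since $f$ is real and even.
\end{itemize}

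\textbf{Step 2: pointwise structure of $W_V$.} The central pointwise identity is
\[
 \ip{W_{V}(z)\vec u_0(z)}{\vec u_0(z)}_{\C^2}=0\qquad\text{for a.e. } z.
\]
To check it, write $\vec u_0=(p(z),\eps\mathrm{i}p(-z))$. Then the two terms of the inner product are
\[
 \eps\mathrm{i}\,f\,|p(z)|^2|p(-z)|^2\quad\text{and}\quad -\eps\mathrm{i}\,f\,|p(z)|^2|p(-z)|^2,
\]
which cancel.

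Let $\ell_1(z)=\C\vec w(z)$, which extends across the zero of $\vec u_0$ by \eqref{eq:u0-local-factor}, and let $\ell_2(z)=\ell_1(z)^\perp$. Since $W_V$ is Hermitian and vanishes on the diagonal of $\ell_1$, it maps $\ell_1\to\ell_2$. The analogous computation for $\ell_2$ shows $W_V$ maps $\ell_2\to\ell_1$.

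The other ingredient is a description of the kernels. For every $k$, $\ker(D(\alpha_*)+k)$ is spanned by $g_k\vec u_0$, where $g_k$ is the usual theta-function quotient; it is meromorphic with a single simple pole at $[0]$, and that pole is cancelled by the zero of $\vec u_0$. By the standard antilinear symmetry, $\ker(D(\alpha_*)^*+\bar k)$ is spanned by sections pointing in $\ell_2$. Consequently the compressed matrix $A_V(k)=P_k\mathbf W_VP_k$ vanishes identically for every $k$.

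\textbf{Step 3: reduction and index (main obstacle).} Consider the closed subspace $\mathcal E\subset\mathscr D_0$ of pairs $(\psi_1,\psi_2)$ with $\psi_1(z)\in\ell_1(z)$ and $\psi_2(z)\in\ell_2(z)$, and the corresponding target space $\mathcal E'\subset\mathscr H_0$, obtained by swapping the lines so as to match the block structure of $H_k$.
\begin{itemize}
\item By Step 2, $\lambda\mathbf W_V$ maps $\mathcal E$ into $\mathcal E'$ and is relatively compact.
\item The off-diagonal chiral blocks, composed with the pointwise projections onto $\ell_1$ and $\ell_2$, act as $\bar\partial$-type operators on the line bundles $\ell_1,\ell_2$ over $X$.
\end{itemize}
The plan is to show that the compressed operator $\mathcal E\to\mathcal E'$ is Fredholm of index $2$, computed by Riemann--Roch from the degrees of $\ell_1$ and $\ell_2$ (equivalently, from the one-dimensional kernels at $\lambda=0$). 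Because the index is invariant under the compact perturbation $\lambda\mathbf W_V$, this gives at least two kernel elements for every $\lambda\in\R$. Finally, these kernel elements must be lifted to genuine zero modes of $H_k$. For this I would try a Grushin problem: the components of $H_k\Psi$ transverse to $\mathcal E'$ are solved for, using the invertibility of $D(\alpha_*)+k$ and $D(\alpha_*)^*+\bar k$ on the complements of their one-dimensional kernels, and the vanishing $A_V\equiv0$ together with the pointwise line-swapping of Step 2 ensures that the correction stays inside the same decomposition.

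This step is where I expect the real difficulty. The chiral blocks do not commute with the pointwise projections: derivatives of $\vec w$ produce first-order cross terms, which are not compact. The first thing to try is to absorb these cross terms into a connection on $\ell_1\oplus\ell_2$, checking via the holomorphic structure of $\vec u_0$ that they are of order zero.
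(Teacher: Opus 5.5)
Your Steps 1--2 and your choice of spaces match the paper's setup. You take $\mathcal E=H^1(\ell_1)\oplus H^1(\ell_2)$ and $\mathcal E'=L^2(\ell_2)\oplus L^2(\ell_1)$, with $\ell_1=L$ and $\ell_2=M=L^\perp$ spanned by $\vec v_0=\mathscr Q\vec u_0$, and you use that $W$ swaps the two lines. Step~3, however, has a genuine gap. You propose to work with the compression $\Pi_{\mathcal E'}H_k|_{\mathcal E}$. An element of the kernel of a compression is not in general a zero mode of $H_k$, and your Grushin ``lifting'' is never carried out. Nor does $A_V\equiv0$ supply the mechanism you attribute to it.

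The missing idea is that no compression is needed, because the lines are spanned by the chiral zero modes themselves. The Leibniz rule gives
\[
 D(\alpha_*)(g\vec u_0)=2(D_{\bar z}g)\,\vec u_0,\qquad D(\alpha_*)^*(g\vec v_0)=2(D_zg)\,\vec v_0 .
\]
Near $0$, write $\vec u_0=z\vec w$ and $\vec v_0=\bar z\vec w_M$. Since $D_{\bar z}z=D_z\bar z=0$, you get $D(\alpha_*)\vec w=0$ and $D(\alpha_*)^*\vec w_M=0$, first for $z\neq0$ and then at $0$ by continuity. So the same formulas hold in the nonvanishing frames.

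Consequently $D(\alpha_*)+k$ maps sections of $\ell_1$ to sections of $\ell_1$, and $D(\alpha_*)^*+\bar k$ maps sections of $\ell_2$ to sections of $\ell_2$, with no cross terms at all. Combined with your Step~2, this shows that $H_k(\alpha_*,\lambda)$ itself maps $\mathcal E$ into $\mathcal E'$. The restricted operator is then literally a restriction of $H_k$, so its kernel sits inside $\ker H_k$. The bound $\dim\ker\ge\ind=2$ finishes the proof, with no Grushin problem and no use of $A_V\equiv0$.

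Your stated obstacle is also mistaken on its own terms. For any smooth frame,
\[
 D(\alpha_*)(g\vec w)=2(D_{\bar z}g)\,\vec w+g\,D(\alpha_*)\vec w .
\]
The transverse part is therefore the multiplication operator $g\mapsto g\,\Pi_{\ell_2}D(\alpha_*)\vec w$. This is of order zero, hence compact $H^1\to L^2$, not a non-compact first-order term. Here it vanishes identically.

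Once invariance is in place, your Riemann--Roch plan goes through as in the paper:
\begin{itemize}
 \item $\vec u_0$ has a single simple zero with winding $+1$, so $\deg\ell_1=1$ and $\ind(D(\alpha_*)+k)|_{\ell_1}=\deg\ell_1+1-g=1$.
 \item $\vec v_0=\bar z\vec w_M$ gives $\deg\ell_2=-1$, so $\ind(D(\alpha_*)^*+\bar k)|_{\ell_2}=-\deg\ell_2=1$.
 \item $\lambda W$ is a compact perturbation, since it factors through the compact embedding $H^1\hookrightarrow L^2$.
\end{itemize}
Finally, your parenthetical ``equivalently, from the one-dimensional kernels at $\lambda=0$'' does not by itself determine the index. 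You also need the cokernels of the restricted operators to vanish, which is exactly what Riemann--Roch provides.
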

Theorem~\ref{thm:characterization} shows that this
construction is complete: every admissible \(V\) with at least two
zero-energy states for every \(k\) and every real \(\lambda\) is of the form
\(V=V[f]\) for some \(f\in\mathcal F_{\rm sym}\).
In particular, take \(U=U_{\rm BM}\), for which a simple real magic parameter
is known to exist.  The theorem then provides a family of counterexamples to Open
Problem~3 of \cite{ZworskiSurvey}. 

We next give a complementary obstruction result to exact flatness.  At
\(\lambda=0\), the Hamiltonian has exactly two zero-energy states at each
momentum.  Denote their span by
\[
 \mathcal Z_k:=\ker H_k(\alpha_*,0),
\]
and let \(P_k\) be the orthogonal projection onto \(\mathcal Z_k\).  The
compression of the nonchiral perturbation to this two-dimensional space is
\begin{equation}\label{eq:compression}
 A_V(k):=P_k\mathbf W_VP_k:\mathcal Z_k\longrightarrow\mathcal Z_k.
\end{equation}
We refer to \(A_V(k)\) as the compressed perturbation.  The symmetries force
\(A_V(k)=a_V(k)I_2\), as shown in Sec.~\ref{sec:nonflatness}; hence
invertibility amounts to the single scalar condition \(a_V(k)\neq0\).

\medskip
\noindent\textbf{Invertibility criterion.}
For an admissible nonchiral \textcolor{black}{tunnelling} potential \(V\), the following two
conditions are equivalent (see \eqref{eq:vanishing-compression}):
\begin{enumerate}
\item there exists \(k_0\in\C/\Lambda^*\) such that
\begin{equation}\label{eq:invertibility-criterion}
 A_V(k_0)\text{ is invertible},\quad\text{or equivalently,}\quad
 a_V(k_0)\neq0;
\end{equation}
\item \(\operatorname{Im}\!\left(V(z)p(-z)\overline{p(z)}\right)\) is
nonzero on a set of positive measure in \(X\).
\end{enumerate}
We say that \(V\) satisfies the invertibility criterion if either of these
conditions holds, and hence if both do.

\begin{proposition}[Nonflatness under the invertibility criterion]\label{prop:compression}
Let \(U\) be a nonzero real-analytic tunnelling potential satisfying
\eqref{eq:Usym}, let \(\alpha_*\) be a simple real magic parameter for this
potential, and let \(V\) be admissible.  Suppose that \(V\) satisfies the
invertibility criterion, and let \(k_0\in\C/\Lambda^*\) be a witnessing
momentum.  Then there exists \(\delta>0\) such that
\[
 H_{k_0}(\alpha_*,\lambda)\text{ has no zero eigenvalue} \quad \text{for } 0<|\lambda|<\delta.
\]
Consequently, for every sufficiently small nonzero \(\lambda\), neither of
the two central bands is constant in \(k\).
\end{proposition}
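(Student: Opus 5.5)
This is a standard Kato–Rellich / Feshbach–Schur perturbation argument at the fixed momentum \(k_0\). Write \(H_{k_0}(\alpha_*,\lambda)=H_{k_0}(\alpha_*,0)+\lambda\mathbf W_V\), where \(\mathbf W_V\) is a bounded self-adjoint operator on \(\mathscr H_0\) since \(V\in L^\infty(X)\).

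By Lemma~\ref{lem:chiral-zero-mode}(1), \(0\) is an isolated eigenvalue of \(H_{k_0}(\alpha_*,0)\) of multiplicity exactly two. It is separated from the rest of the spectrum by a gap \(g>0\). Let \(\Gamma\) be the circle \(|E|=g/2\). For \(|\lambda|\,\|V\|_{L^\infty}<g/4\), the Riesz projection
\[
P(\lambda)=\frac{1}{2\pi\mathrm{i}}\oint_\Gamma (E-H_{k_0}(\alpha_*,\lambda))^{-1}\,dE
\]
is analytic in \(\lambda\) and has rank two. Moreover \(P(0)=P_{k_0}\), and the spectrum of \(H_{k_0}(\alpha_*,\lambda)\) inside \(|E|<g/2\) consists of exactly two eigenvalues \(E_1(\lambda),E_2(\lambda)\), counted with multiplicity. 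By self-adjointness, Rellich's theorem lets us take these branches real-analytic. Any zero eigenvalue would have to be one of them, since the rest of the spectrum stays at distance at least \(g/4\) from \(0\).

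Next identify the first-order coefficients. By standard degenerate perturbation theory, \(E_j'(0)\) are the eigenvalues of the compression \(P_{k_0}\mathbf W_VP_{k_0}=A_V(k_0)\). Equivalently, the reduced operator
\[
P(\lambda)H_{k_0}(\alpha_*,\lambda)P(\lambda)
\]
is similar via the Kato transformation to a \(2\times2\) matrix \(M(\lambda)=\lambda A_V(k_0)+O(\lambda^2)\). The symmetry statement announced before the criterion, to be proved in Sec.~\ref{sec:nonflatness}, gives \(A_V(k_0)=a_V(k_0)I_2\) with \(a_V(k_0)\neq0\), and \(a_V(k_0)\) is real because \(\mathbf W_V\) is Hermitian. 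Therefore
\[
E_j(\lambda)=\lambda a_V(k_0)+O(\lambda^2),\qquad j=1,2.
\]
There is a uniform constant \(C\) in the remainder, obtained from the resolvent bound on \(\Gamma\). Choosing
\[
\delta<\min\Bigl(\frac{g}{4\|V\|_\infty},\ \frac{|a_V(k_0)|}{C}\Bigr)
\]
gives \(|E_j(\lambda)|\ge|\lambda|\bigl(|a_V(k_0)|-C|\lambda|\bigr)>0\) for \(0<|\lambda|<\delta\). Hence \(H_{k_0}(\alpha_*,\lambda)\) has no zero eigenvalue.

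For the band statement, order the eigenvalues near \(0\) for every \(k\). The uniform separation in Lemma~\ref{lem:chiral-zero-mode}(1) is uniform in \(k\) over the compact torus \(\C/\Lambda^*\), so the same contour works for all \(k\) when \(\lambda\) is small. This singles out two central bands \(E_\pm(k,\lambda)\). By the symmetries cited in the introduction, \(H_{\pm K}(\alpha_*,\lambda)\) has a zero eigenvalue for all \(\lambda\). This protected Dirac point is what I would invoke, and it is the step I expect to need the most care in citing precisely. The Dirac point should give \(E_\pm(\pm K,\lambda)=0\) for both central bands; if it only yields one zero eigenvalue, I would argue with the band that vanishes there. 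In either case, at \(k_0\) both \(E_\pm(k_0,\lambda)\neq0\), so no central band is constant in \(k\).

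The only genuinely nontrivial input is \(A_V(k_0)=a_V(k_0)I_2\). If one wanted to avoid it, invertibility alone gives \(E_1'(0)E_2'(0)=\det A_V(k_0)\neq0\), and the same argument goes through, since both eigenvalues of the Hermitian matrix \(A_V(k_0)\) are then nonzero.
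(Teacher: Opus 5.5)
Your proof is correct, but it takes a different route from the paper.

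\textbf{Your route.} You use analytic eigenvalue perturbation theory: a rank-two Riesz projection, Rellich's real-analytic branches, and first-order degenerate perturbation theory identifying $E_j'(0)$ with the eigenvalues of $A_V(k_0)$.

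\textbf{The paper's route.} The paper uses a Feshbach--Schur reduction. With $Q_{k_0}=I-P_{k_0}$, the block $Q_{k_0}H_{k_0}(\alpha_*,\lambda)Q_{k_0}$ stays invertible for small $|\lambda|$ by a Neumann series. Then $H_{k_0}(\alpha_*,\lambda)$ is invertible if and only if the Schur complement $S(\lambda)=\lambda A_V(k_0)-\lambda^2P_{k_0}\mathbf W_VQ_{k_0}R_{k_0}(\lambda)Q_{k_0}\mathbf W_VP_{k_0}$ is. Since $\lambda^{-1}S(\lambda)\to A_V(k_0)$ in norm, the conclusion follows.

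\textbf{What each buys.} The paper's argument is purely algebraic and works directly with invertibility. It needs no analyticity, no eigenvalue branches, and no control of a remainder constant. It uses only the invertibility of $A_V(k_0)$, not the scalar form $A_V=a_VI_2$; you also correctly note that the scalar form is dispensable. Your route costs more machinery but delivers more:
\begin{itemize}
\item the asymptotics $E_j(\lambda)=\lambda a_V(k_0)+O(\lambda^2)$ of both central eigenvalues, which is the Becker--Zworski expansion the paper mentions as an alternative proof;
\item a uniform-in-$k$ contour that makes the labelling of the two central bands explicit, which the paper leaves implicit.
\end{itemize}
Your Kato-reduction sentence already contains the paper's idea. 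Since $M(\lambda)/\lambda=A_V(k_0)+O(\lambda)$ is invertible for small $\lambda\neq0$, that alone proves the first claim without Rellich's theorem.

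\textbf{One point in the band statement.} Your fallback of arguing only with the band that vanishes at $\pm K$ would show nonconstancy of that band alone. To get ``neither'' you need the multiplicity-two protected-state result $\dim\ker H_{\pm K}(\alpha_*,\lambda)\ge2$ (Becker--Zworski, Proposition~2). That is exactly what the paper invokes, and it is what your primary argument assumes.
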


We prove Proposition~\ref{prop:compression} below by a direct linear-algebraic
argument.  It is also an immediate consequence of the perturbative
eigenvalue expansion in \cite[Eq.~(1.2) and Lemma~4]{BeckerZworski}.
 
The same criterion gives a global conclusion in the coupling parameter.

\begin{theorem}[Nonflatness outside a discrete set of couplings]
\label{thm:discrete-couplings}
Let \(U\) be a nonzero real-analytic tunnelling potential satisfying
\eqref{eq:Usym}, let \(\alpha_*\) be a simple real magic parameter for this
potential, and let \(V\) be admissible.  Suppose that \(V\) satisfies the
invertibility criterion, and let \(k_0\) be a witnessing momentum.  Then
there is a discrete set \(\Sigma_V\subset\R\), with no finite accumulation
point, such that the two central bands are not flat for every
\(\lambda\in\R\setminus\Sigma_V\).  In particular, every bounded interval
contains only finitely many coupling strengths at which exact flatness can
occur.
\end{theorem}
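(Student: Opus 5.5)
The plan is to exploit the fact that \(\lambda\mapsto H_{k_0}(\alpha_*,\lambda)\) is an analytic (indeed affine) self-adjoint family with compact resolvent, and to apply analytic perturbation theory together with Proposition~\ref{prop:compression}. The starting point is the definition of flatness. The two central bands \(E_\pm(k,\lambda)\) are the eigenvalues of \(H_k(\alpha_*,\lambda)\) continuing the zero eigenvalues at \(\lambda=0\). At \(k=\pm K\) zero-energy states persist for every real \(\lambda\): this is the protected Dirac point, forced by the rotation and translation symmetries, and it was used in the proof of Proposition~\ref{prop:compression}. Exact flatness of the central bands at coupling \(\lambda\) therefore forces \(0\in\operatorname{Spec}H_{k}(\alpha_*,\lambda)\) for every \(k\), and in particular at \(k=k_0\). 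It therefore suffices to show that
\[
\Sigma_V:=\{\lambda\in\R:\ 0\in\operatorname{Spec}H_{k_0}(\alpha_*,\lambda)\}
\]
is discrete in \(\R\) with no finite accumulation point. This set contains the set of couplings at which flatness can occur.

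To prove discreteness, write \(H_{k_0}(\alpha_*,\lambda)=H_{k_0}(\alpha_*,0)+\lambda\mathbf W_V\). Here \(\mathbf W_V\) is bounded, since \(V\in L^\infty(X)\), and \(H_{k_0}(\alpha_*,0)\) has compact resolvent. Hence this is a holomorphic family of type (A) in \(\lambda\in\C\), self-adjoint for real \(\lambda\). Fix any real \(\mu\) outside the spectrum of \(H_{k_0}(\alpha_*,0)\), for instance a small nonzero \(\mu\) in the spectral gap provided by Lemma~\ref{lem:chiral-zero-mode}(1). Write
\[
H_{k_0}(\alpha_*,\lambda)=(H_{k_0}(\alpha_*,0)-\mu)\bigl(I+(H_{k_0}(\alpha_*,0)-\mu)^{-1}(\mu+\lambda\mathbf W_V)\bigr).
\]
The operator \(K(\lambda):=(H_{k_0}(\alpha_*,0)-\mu)^{-1}(\mu+\lambda\mathbf W_V)\) is compact and depends analytically (affinely) on \(\lambda\in\C\). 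By the analytic Fredholm theorem, either \(I+K(\lambda)\) is nowhere invertible on \(\C\), or it is invertible away from a discrete subset of \(\C\). The second alternative gives exactly that \(\Sigma_V\) is discrete in \(\C\), hence in \(\R\), with no finite accumulation point.

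The first alternative is excluded by Proposition~\ref{prop:compression}: for \(0<|\lambda|<\delta\), the operator \(H_{k_0}(\alpha_*,\lambda)\) has no zero eigenvalue. Since the spectrum is discrete, the operator is then invertible, so \(I+K(\lambda)\) is invertible at such \(\lambda\).

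An alternative argument avoids the Fredholm theorem. The real-analytic eigenvalue branches of Rellich's theorem (Kato, Ch.~VII) give, near each \(\lambda\), finitely many analytic branches near \(0\). A branch vanishing on an accumulating set would vanish identically on the connected real line, again contradicting Proposition~\ref{prop:compression}.

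The step I expect to need the most care is the first reduction: checking that nonflatness really follows from the absence of a zero eigenvalue at \(k_0\). This requires two things. First, the identification of "the two central bands" must remain meaningful globally in \(\lambda\), where bands may cross other bands. Second, zero modes at \(\pm K\) must exist for all real \(\lambda\), not just small \(\lambda\). I would state the conclusion as follows: for \(\lambda\notin\Sigma_V\), no band can be identically zero. This is because \(0\) is an eigenvalue of \(H_{\pm K}(\alpha_*,\lambda)\) but not of \(H_{k_0}(\alpha_*,\lambda)\). So any band taking the value \(0\) at \(K\), and at least two do, is nonconstant. The existence of a zero mode at \(K\) for every \(\lambda\) I would obtain by taking the explicit protected state, built from the \(\C^2\) kernel of the rotation-symmetry-reduced problem, and verifying that \(\mathbf W_V\) annihilates it. This follows from the vanishing of \(\vec u_0\) at \(0\) and the rotation eigenvalues in \eqref{eq:Vsym}, as in the argument for Proposition~\ref{prop:compression}.
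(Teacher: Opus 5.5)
Your argument is correct and is essentially the paper's proof. The paper also writes $H_{k_0}(\alpha_*,\lambda)=H_{k_0}(\alpha_*,0)+\lambda\mathbf W_V$ and applies the analytic Fredholm theorem. It then converts invertibility at $k_0$ into nonflatness through the protected zero modes at $\pm K$. The only difference is where you normalize: the paper factors at an invertible coupling $\lambda_0$ supplied by Proposition~\ref{prop:compression}, whereas you factor at a resolvent point $\mu$ of $H_{k_0}(\alpha_*,0)$ and use the Proposition to rule out the nowhere-invertible alternative. That difference is cosmetic.

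One caution concerns your last paragraph. The self-contained verification you sketch for the protected states would fail. Since $\mathbf W_V$ is block diagonal and $\ker H_{\pm K}(\alpha_*,0)$ is spanned by $(\vec u_{\pm K},0)$ and $(0,\vec v_{\pm K})$, any such state has the form $(a\vec u_{\pm K},b\vec v_{\pm K})$. For it to be annihilated you would need $W_V\vec u_{\pm K}=0$ or $W_V\vec v_{\pm K}=0$. That forces $V(z)u_2(z)=V(-z)u_1(z)=0$ almost everywhere, which is impossible for $V\not\equiv0$ because these components vanish only on null sets (e.g.\ $\vec u_k=\nu_kF_k\vec u_0$). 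Note also that $\vec u_0$ is the magic zero mode at $k=0$, not a protected state at $\pm K$. The zero modes at $\pm K$ genuinely move with $\lambda$, so their persistence for all real $\lambda$ is a nontrivial symmetry theorem. You should quote it from \cite[Proposition~2]{BeckerZworski}, as the paper does, rather than re-derive it this way.
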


Proposition~\ref{prop:compression} excludes exact flatness throughout a
punctured neighborhood of \(\lambda=0\), whereas
Theorem~\ref{thm:discrete-couplings} applies over the full real coupling
axis, allowing only isolated exceptional values.

Finally, we apply the invertibility criterion to the standard BM model, where
\(U=U_{\rm BM}\) and \(V=V_{\rm BM}\) are given by
\eqref{eq:BMpotentials}.

\begin{proposition}[Nonvanishing of the standard BM compression]
\label{prop:standard-BM}
Let \(\alpha_*\) be a simple real magic parameter and take \(U=U_{\rm BM}\) and \(V=V_{\rm BM}\) from
\eqref{eq:BMpotentials}.  Then
\[
 a_{V_{\rm BM}}\not\equiv0\quad\text{on }\C/\Lambda^*.
\]
Equivalently, \(V_{\rm BM}\) satisfies the invertibility criterion. 
\end{proposition}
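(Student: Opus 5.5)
We will verify condition (2) of the invertibility criterion, which is equivalent to (1). We show that
\[
 F(z):=\operatorname{Im}\!\left(V_{\rm BM}(z)\,p(-z)\overline{p(z)}\right)
\]
does not vanish identically on \(X\). Since \(U_{\rm BM}\) and \(V_{\rm BM}\) are real analytic, \(p\) is real analytic by Lemma~\ref{lem:chiral-zero-mode}(2), so \(F\) is real analytic on \(\C\). Hence \(F\) vanishes on a set of positive measure only if it vanishes identically. It therefore suffices to find a single nonvanishing Taylor coefficient of \(F\) at some convenient point. The natural choice is the origin, where the structure of \(p\) is explicitly known.

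\textbf{Local form of \(p\) at the origin.} By Lemma~\ref{lem:chiral-zero-mode}(3), \(\vec u_0(z)=z\vec w(z)\) with \(\vec w(0)\neq0\). Consider the two components \(p(z)\) and \(\eps\mathrm{i}p(-z)\), and write \(p(z)=az+b\bar z+O(|z|^2)\). The rotation law \(p(\omega z)=\omega p(z)\) gives \(a\omega+b\bar\omega=\omega(a+b)\) identically in the linear terms, which forces \(b=0\). The condition \(\vec w(0)\neq0\) then gives \(a\neq0\). (If instead \(a=0\), the first component would vanish to second order, and so would the second component \(\eps\mathrm{i}p(-z)\); this would contradict \(\vec w(0)\neq0\).)

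Next we pin down higher-order terms. The rotation law restricts the monomials \(z^m\bar z^n\) in \(p\) to those with \(m-n\equiv1 \pmod 3\). The reflection law \eqref{eq:u0-reflection}, \(\overline{p(-\bar z)}=\rho p(z)\), relates \(\bar a\) to \(\rho a\) up to sign and constrains the phases of the remaining coefficients.

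\textbf{Expanding \(F\).} We have \(p(-z)\overline{p(z)}=-|a|^2|z|^2+\text{(higher order)}\). Also \(V_{\rm BM}(0)=3\), the linear term of \(V_{\rm BM}\) vanishes by rotation invariance, and the quadratic term is a multiple of \(|z|^2\) with a real coefficient. The leading order of \(F\), namely \(\operatorname{Im}(-3|a|^2|z|^2)\), is therefore zero. Consequently we must go to higher order: the cubic terms of \(p\) (monomials \(z^4,\ \bar z^2\), and \(z^2\bar z\)) together with the terms of \(V_{\rm BM}\) proportional to \(z^3+\bar z^3\) produce degree \(4\) and degree \(5\) contributions to \(F\), such as \(\operatorname{Im}(\bar a c\,\bar z^3)\). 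The plan is to identify the lowest degree at which a term of \(F\) survives and to show that its coefficient is nonzero, using the fact that \(V_{\rm BM}\) has the explicit Taylor expansion
\[
 V_{\rm BM}(z)=\sum_\ell e^{\mathrm{i}\ip{z}{\omega^\ell K}},
\]
whose cubic term \(-\tfrac{\mathrm{i}}{6}\sum_\ell\ip{z}{\omega^\ell K}^3\) is purely imaginary and equal to a nonzero multiple of \(\mathrm{i}\,\operatorname{Re}(z^3)\).

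\textbf{Main obstacle.} The cubic term of \(V_{\rm BM}\) multiplies the leading part \(-|a|^2|z|^2\) of \(p(-z)\overline{p(z)}\), which is real. This yields the contribution \(-c|a|^2|z|^2\operatorname{Re}(z^3)\) to \(F\) at degree \(5\), with \(c\neq0\) coming from \(V_{\rm BM}\) alone. The difficulty is that other degree-\(5\) contributions, coming from the higher Taylor coefficients of \(p\), which are unknown, could potentially cancel this one. We first use the oddness structure to handle this. Since \(p(-z)\overline{p(z)}\) involves both \(p(z)\) and \(p(-z)\), its expansion contains only even-degree terms when restricted to terms of the form \((\text{odd})\times(\text{odd})\); a careful parity analysis should show that its degree-\(3\) part cancels. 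Combined with the reflection constraints, the degree-\(4\) part of \(p(-z)\overline{p(z)}\) should then be real in the relevant harmonic (the \(\operatorname{Re}(z^3)\cdot\) stuff). If this parity and reflection bookkeeping does not fully eliminate the competing terms, the fallback is to use the Dirac equation \(D(\alpha_*)\vec u_0=0\). Locally this reads \(2D_{\bar z}p=-\alpha_*U_{\rm BM}(z)\,\eps\mathrm{i}p(-z)\), which determines the \(\bar z\)-dependence of \(p\) (for instance, the coefficient of \(z^2\bar z\)) in terms of \(a\) and \(U_{\rm BM}(0)\)-type Taylor data. This allows the competing coefficient to be computed explicitly and shown to be unable to cancel the \(V_{\rm BM}\) contribution. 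The computation at this step, which tracks the phases fixed by \(\rho\) and \(\eps\), is where the proof's real content lies.
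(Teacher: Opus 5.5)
\textbf{There is a genuine gap.} Your framework is the right one: Taylor-expand \(F\) at \(0\), use rotation and reflection to restrict monomials and phases, and use the local equation \(\partial_{\bar z}p=tU_{\rm BM}(z)p(-z)\) with \(t=\eps\alpha_*/2\). But the step you single out as the crux fails.

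The Dirac equation fixes only the coefficients of monomials with a positive power of \(\bar z\). The holomorphic monomials allowed by the rotation law, \(z^{3j+1}\), lie in \(\ker\partial_{\bar z}\). Their coefficients are global data of the zero mode, and no local computation determines them. The first of these, the coefficient of \(z^4\), enters \(F\) exactly at degree \(5\).

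Normalize \(p(z)=z+O(|z|^2)\). This forces \(\rho=-1\), so odd-degree coefficients are real and even-degree ones purely imaginary. The Dirac equation then gives \(p(z)=z-\frac{3K^2t}{2}z^2\bar z+\mathrm{i}\beta z^4-\frac{\mathrm{i}K^3t}{8}z\bar z^3+O(|z|^5)\), with \(\beta\in\R\) free; the \(\bar z^2\) term you list is absent. The degree-\(5\) part of \(F\) is \(\bigl(3\beta+\frac{K^3(1-6t)}{16}\bigr)(z^4\bar z+z\bar z^4)\).

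Reflection does not remove \(\beta\). It makes the \(z^4\) coefficient purely imaginary, which is exactly why it contributes to the imaginary part. Hence the degree-\(5\) coefficient vanishes when \(\beta=K^3(6t-1)/48\), and your claim that the competing term is ``unable to cancel'' the \(V_{\rm BM}\) contribution is false. No argument confined to degree \(5\) can close the proof. The same goes for ``the lowest surviving degree'', which you cannot identify without knowing \(\beta\).

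\textbf{The missing idea} is to argue by contradiction across two orders and eliminate \(\beta\).
\begin{enumerate}
\item Assume \(F\equiv0\). The degree-\(5\) coefficient then forces \(\beta=K^3(6t-1)/48\).
\item The degree-\(6\) part of \(F\) vanishes automatically.
\item Through degree \(7\), the new coefficients of \(p\) sit on non-holomorphic monomials \(z^3\bar z^2\), \(z^5\bar z\), \(z^2\bar z^4\). They are therefore fixed by the Dirac equation in terms of \(t\) and \(\beta\). The next free coefficient, that of \(z^7\), first enters at degree \(8\).
\item This yields the coefficient \(\frac{K^2}{256}(216K^3t^2-12K^3t-K^3-192\beta)\) of \(z^5\bar z^2+z^2\bar z^5\).
\item Substituting the forced \(\beta\) gives \(\frac{3K^5}{256}(72t^2-12t+1)\). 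This is strictly positive because \(t\) is real and the quadratic has negative discriminant, which contradicts \(F\equiv0\).
\end{enumerate}
This contradiction is the actual content of the proof, and it is where the reality of \(\alpha_*\) enters.
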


Proposition~\ref{prop:standard-BM} provides a momentum \(k_0\) for which
\(A_{V_{\rm BM}}(k_0)\) is invertible.  Proposition~\ref{prop:compression}
therefore excludes exact flatness for every sufficiently small nonzero real
\(\lambda\), while Theorem~\ref{thm:discrete-couplings} shows that, over the
full real coupling axis, exact flatness can occur only at a discrete set of
values with no finite accumulation point.

\section{Exact flat bands beyond the chiral limit}\label{sec:construction}

We first check that the potentials \(V[f]\), defined in \eqref{eq:Vf} for
\(f\in\mathcal F_{\rm sym}\), are admissible.  By
Lemma~\ref{lem:chiral-zero-mode}, the
first component of the chiral zero mode satisfies \eqref{eq:p-sym}.
Using \(V_0(z)=p(z)\overline{p(-z)}\), these identities give
\(V_0(z+\gamma)=\mathrm{e}^{-2\mathrm{i}\ip{\gamma}{K}}V_0(z)
               =\mathrm{e}^{\mathrm{i}\ip{\gamma}{K}}V_0(z),\)
because \(\mathrm{e}^{3\mathrm{i}\ip{\gamma}{K}}=1\), and that
\(V_0(\omega z)=V_0(z)\). We also have
\(V_0(-z)=\overline{V_0(z)}\).

It remains to check reflection.  Substitution of \eqref{eq:u0} into the
reflection relations \eqref{eq:u0-reflection} gives
\[
 \overline{p(-\bar z)}=\rho p(z),
 \quad p(\bar z)=\bar\rho\,\overline{p(-z)},
\]
and hence \(V_0(\bar z)=V_0(z)\).  Thus \eqref{eq:Vsym} holds.  By
Lemma~\ref{lem:chiral-zero-mode}, \(p\) is real analytic.  Since $p(z)\neq0$ for some $z$, the nonzero sets of $p(z)$ and $\overline{p(-z)}$ are open and
dense; therefore \(V_0(z)\not\equiv0\).
\begin{figure}[H]
\centering
\includegraphics[width=0.96\textwidth]{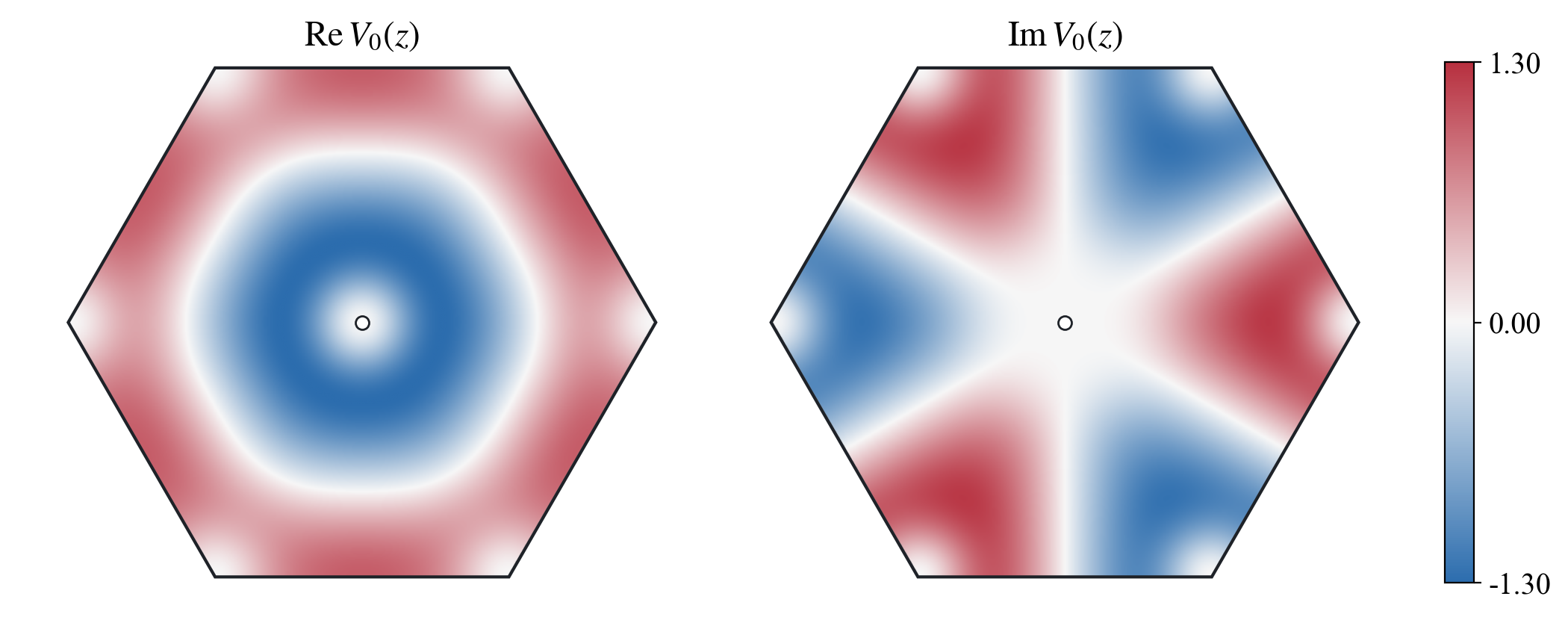}
\caption{Real and imaginary parts of \(V_0\) on a fundamental cell of its
period lattice.  The white circle marks the zero at \(z=0\).}
\label{fig:v0}
\end{figure}

Multiplication by an amplitude \(f\) satisfying
\eqref{eq:f-sym} preserves these transformation laws, and
\(fV_0\in L^\infty(X)\) by definition.  Thus every \(V[f]=fV_0\) with
nonzero \(f\in\mathcal F_{\rm sym}\) is admissible.

We now fix one such potential and prove that its Hamiltonian has at least two
zero modes for every Bloch momentum \(k\).
The twisted boundary condition \eqref{eq:twisted-boundary} defines the
rank-two bundle
\[
 E_0=(\C\times\C^2)/{\sim}\longrightarrow X,
 \quad (z,\xi)\sim(z+\gamma,T_\gamma\xi).
\]
We first construct two line subbundles \(L,M\subset E_0\) that define a
closed subsystem for the Hamiltonian, and then compute the index of its
restriction.

The chiral Hamiltonian has one zero mode from \(D(\alpha_*)\) and one from its
adjoint.  The anti-linear operation
\[
 (\mathscr Qf)(z)=\overline{f(-z)}
\]
turns the first into the second.  Indeed,
\(\mathscr QD(\alpha_*)\mathscr Q=D(\alpha_*)^*\), so
\begin{equation}\label{eq:v0}
 \vec v_0(z):=(\mathscr Q\vec u_0)(z)
 =\begin{pmatrix}
   \overline{p(-z)}\\-\eps\mathrm{i}\overline{p(z)}
  \end{pmatrix}
 \in\ker D(\alpha_*)^*.
\end{equation}
At each point where they are nonzero, \(\vec u_0(z)\) and \(\vec v_0(z)\)
span orthogonal lines in \(\C^2\):
\[
 \vec u_0(z)^*\vec v_0(z)
 =\overline{p(z)p(-z)}
  +(-\eps\mathrm{i}\overline{p(-z)})
   (-\eps\mathrm{i}\overline{p(z)})=0.
\]
Fix \(f\in\mathcal F_{\rm sym}\), and define
\[
 W(z):=W_{V[f]}(z)=f(z)W_{V_0}(z),
 \quad \mu_f(z):=\eps\mathrm{i}f(z)p(z)p(-z).
\]
Since \(|\mu_f|=|V[f]|\), the matrix coefficients of \(W\), as well as
\(\mu_f\), belong to \(L^\infty(X)\).
The special form of \(V[f]\) makes \(W(z)\) map each zero-mode direction into
the other.  Direct calculation gives
\begin{equation}\label{eq:line-mapping}
 W(z)\vec u_0(z)=\mu_f(z)\vec v_0(z),
 \quad W(z)\vec v_0(z)=\overline{\mu_f(z)}\vec u_0(z)
 \quad\text{a.e.}
\end{equation}
For example,
\[
 W(z)\vec u_0(z)=
 \begin{pmatrix}
  \eps\mathrm{i}f(z)p(z)|p(-z)|^2\\f(z)p(-z)|p(z)|^2
 \end{pmatrix}
 =\eps\mathrm{i}f(z)p(z)p(-z)
 \begin{pmatrix}
  \overline{p(-z)}\\-\eps\mathrm{i}\overline{p(z)}
 \end{pmatrix}.
\]
We now show that, although \(\vec u_0\) and \(\vec v_0\) vanish at \(z=0\),
the lines they span extend smoothly across \(z=0\).  Away from zero, each
nonzero vector \(\vec u_0(z)\) or \(\vec v_0(z)\) selects a line in the
corresponding fiber of \(E_0\).  Indeed,
Equation~\eqref{eq:u0-local-factor} in
Lemma~\ref{lem:chiral-zero-mode} and the definition of \(\vec v_0\) in
Equation~\eqref{eq:v0} imply that, near \(z=0\),
\begin{equation}\label{eq:local-factor}
 \vec u_0(z)=z\vec w(z),\quad
 \vec v_0(z)=\bar z\vec w_M(z),
\end{equation}
where
\(
 \vec w_M(z):=-\overline{\vec w(-z)}
\).
In particular, \(\vec w(0),\vec w_M(0)\neq0\).
We therefore define \(L\) and \(M\) as follows:
\begin{equation}\label{eq:fibres}
 L_{[z]}=
 \begin{cases}
  \C\vec u_0(z),&{ [z]\neq[0]},\\
  \C\vec w(0),&{ [z]=[0]},
 \end{cases}
 \quad
 M_{[z]}=
 \begin{cases}
  \C\vec v_0(z),&{ [z]\neq[0]},\\
  \C\vec w_M(0),&{ [z]=[0]}.
 \end{cases}
\end{equation}
Here \(\C\vec u_0(z)=\{c\vec u_0(z):c\in\C\}\) is a one-dimensional subspace of \(\C^2\).  Since both \(\vec u_0\) and \(\vec v_0\) satisfy the twisted boundary condition \eqref{eq:twisted-boundary},  \(L\) and \(M\) are line subbundles inside \(E_0\). The pointwise orthogonality of \(\vec u_0\) and \(\vec v_0\) extends through zero by continuity, so these line bundles give a smooth orthogonal splitting \(E_0=L\oplus M\).

The next lemma records how the four blocks of the BM Hamiltonian act on the
pair \(L,M\).

\begin{lemma}[Mapping properties on \(L\) and \(M\)]\label{lem:closed}
\begin{align*}
 D(\alpha_*)&:C^\infty(\mathbb{C} / \Lambda; L)\longrightarrow C^\infty(\mathbb{C} / \Lambda; L),\quad 
 D(\alpha_*)^*:C^\infty(\mathbb{C} / \Lambda; M)\longrightarrow C^\infty(\mathbb{C} / \Lambda; M).
\end{align*}
More precisely, the differential restrictions
\(D(\alpha_*)|_L:H^1(\mathbb{C} / \Lambda;L)\to L^2(\mathbb{C} / \Lambda; L)\) and
\(D(\alpha_*)^*|_M:H^1(\mathbb{C} / \Lambda; M)\to L^2(\mathbb{C} / \Lambda; M)\) are bounded.  Multiplication by \(W\)
maps \(L\) to \(M\) and \(M\) to \(L\) and induces
bounded operators
\[
 W|_L:L^2(\mathbb{C} / \Lambda; L)\longrightarrow L^2(\mathbb{C} / \Lambda; M),\quad
 W|_M:L^2(\mathbb{C} / \Lambda; M)\longrightarrow L^2(\mathbb{C} / \Lambda; L).
\]
\end{lemma}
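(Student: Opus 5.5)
The plan is to prove the two mapping statements separately, using only the local factorization \eqref{eq:local-factor}, the kernel relations $D(\alpha_*)\vec u_0=0$ and $D(\alpha_*)^*\vec v_0=0$, and the pointwise identities \eqref{eq:line-mapping}.

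\emph{Step 1: $D(\alpha_*)$ preserves $L$.} Away from $[0]$ the line $L$ is spanned by the nonvanishing section $\vec u_0$. Any smooth section of $L$ there can be written as $s=g\,\vec u_0$ with $g$ a scalar function satisfying the periodicity dictated by the twisted boundary condition. Since $D(\alpha_*)$ has scalar principal part $2D_{\bar z}\,\mathrm{Id}$ and $D(\alpha_*)\vec u_0=0$, the Leibniz rule gives
\[
 D(\alpha_*)(g\vec u_0)=2(D_{\bar z}g)\,\vec u_0 ,
\]
which lies in $L$. Near $[0]$ I would instead use the smooth frame $\vec w$. From $\vec u_0=z\vec w$ and $D_{\bar z}z=0$ we get $zD(\alpha_*)\vec w=0$, so $D(\alpha_*)\vec w=0$ away from $0$, and by continuity on the whole coordinate disc. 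Hence every local smooth section $h\vec w$ satisfies $D(\alpha_*)(h\vec w)=2(D_{\bar z}h)\vec w$, which is a smooth section of $L$. This gives the $C^\infty$ statement.

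The same formula in the frames $\vec u_0/|\vec u_0|$ (away from $0$) and $\vec w$ (near $0$) shows that, in a trivialization, $D(\alpha_*)|_L$ is $2D_{\bar z}$ plus a zeroth-order term coming from differentiating the normalizing factor. This term is bounded because the frames are smooth and nonvanishing on a finite cover of the compact torus. That yields boundedness $H^1\to L^2$.

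\emph{Step 2: $D(\alpha_*)^*$ preserves $M$.} The argument is identical with $\vec v_0=\bar z\vec w_M$ and $D(\alpha_*)^*\vec v_0=0$. Here the principal part of $D(\alpha_*)^*$ is $2D_z$, and $D_z\bar z=0$, so $D(\alpha_*)^*\vec w_M=0$ near $0$. Alternatively, everything follows by conjugating Step 1 with $\mathscr Q$, since $\mathscr QD(\alpha_*)\mathscr Q=D(\alpha_*)^*$ and $\mathscr Q$ maps $L$ onto $M$.

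\emph{Step 3: $W$ swaps $L$ and $M$.} Away from $[0]$, \eqref{eq:line-mapping} shows that $W$ maps the fibre $L_{[z]}$ into $M_{[z]}$ and $M_{[z]}$ into $L_{[z]}$ a.e. Since $\{[0]\}$ has measure zero and $L^2$ sections are defined only a.e., this suffices. Boundedness in $L^2$ follows because the entries of $W$ are bounded by $|V[f]|\in L^\infty(X)$.

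The main subtlety is the point $[0]$. One has to check that the extended frames $\vec w$ and $\vec w_M$ are still annihilated by $D(\alpha_*)$ and $D(\alpha_*)^*$ respectively, so that no distributional contribution arises at the zero. I would handle this by the continuity argument of Step 1: a smooth function vanishing on a punctured disc vanishes on the whole disc, so no delta-type term can appear at the origin.
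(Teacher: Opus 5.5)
Your proposal is correct and follows essentially the same route as the paper. Away from $[0]$ you apply the Leibniz rule in the frames $\vec u_0$ and $\vec v_0$; near $[0]$ you use the frames $\vec w$ and $\vec w_M$, with $D(\alpha_*)\vec w=0$ and $D(\alpha_*)^*\vec w_M=0$ obtained from $D_{\bar z}z=D_z\bar z=0$ and continuity; and for $W$ you use the pointwise identities \eqref{eq:line-mapping} together with the $L^\infty$ bound. Your extra details — the explicit zeroth-order term in a normalized frame for the $H^1\to L^2$ bound, and the alternative derivation of the $M$ statement from $\mathscr Q D(\alpha_*)\mathscr Q=D(\alpha_*)^*$ — are valid refinements, not a different argument.
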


\begin{proof}
For $z\neq 0$, the nonvanishing vectors \(\vec u_0\) and
\(\vec v_0\) are local frames for \(L\) and \(M\), respectively.  Since
\(D(\alpha_*)\vec u_0=0\) and \(D(\alpha_*)^*\vec v_0=0\), the Leibniz rule
gives, for local scalar functions \(f\) and \(g\),
\[
 D(\alpha_*)\bigl(f\vec u_0\bigr)=2(D_{\bar z}f)\vec u_0,
 \quad
 D(\alpha_*)^*\bigl(g\vec v_0\bigr)=2(D_zg)\vec v_0.
\]

Near \(z=0\), the nonvanishing vectors \(\vec w\) and \(\vec w_M\) are
local frames for \(L\) and \(M\).  For \(z\neq0\) near \(0\), the
identities \(\vec u_0=z\vec w\) and \(\vec v_0=\bar z\vec w_M\), together
with \(D_{\bar z}z=D_z\bar z=0\), give
\(D(\alpha_*)\vec w=0\) and \(D(\alpha_*)^*\vec w_M=0\).  By smoothness,
these identities also hold at \(z=0\).  Hence
\[
 D(\alpha_*)\bigl(f\vec w\bigr)=2(D_{\bar z}f)\vec w,
 \quad
 D(\alpha_*)^*\bigl(g\vec w_M\bigr)=2(D_zg)\vec w_M.
\]
These two sets of local frames cover \(X\), so $D(\alpha_*)$ and \(D(\alpha_*)^*\) preserve \(L\) and \(M\) globally.

Equation~\eqref{eq:line-mapping} shows that \(W\) maps \(L\) to \(M\) and
\(M\) to \(L\); the single point \([0]\), where the
frames \(\vec u_0\) and \(\vec v_0\) vanish, is irrelevant for \(L^2\)
sections.  Since the matrix coefficients of \(W\) belong to \(L^\infty(X)\),
these restrictions are bounded on \(L^2\).  The first-order differential
restrictions are bounded from \(H^1\) to \(L^2\) on the compact torus \(X\).
\end{proof}

We now use the line bundles \(L\) and \(M\) to define a restricted Hamiltonian
and count its zero modes using the Fredholm index.  For a Fredholm operator
\(T\),
\[
 \ind_\C T=\dim_\C\ker T-\dim_\C\operatorname{coker}T,
\]
so \(\dim_\C\ker T\geq\ind_\C T\).  The restricted differential operators
considered below are elliptic on the compact real-space torus
\(X=\C/\Lambda\), and hence Fredholm.

\begin{lemma}\label{lem:line-indices}
The line bundles \(L\) and \(M\) satisfy
\[
 \deg_XL=1,\quad \deg_XM=-1.
\]
For every \(k\in\C\), the restricted operators
\((D(\alpha_*)+k)|_L\) and \((D(\alpha_*)^*+\bar k)|_M\) are Fredholm, with
\begin{equation}\label{eq:line-indices}
 \ind_\C(D(\alpha_*)+k)|_L=1,\quad
 \ind_\C(D(\alpha_*)^*+\bar k)|_M=1.
\end{equation}
\end{lemma}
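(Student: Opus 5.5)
Two things have to be computed: the degrees of \(L\) and \(M\), and then the indices via Riemann--Roch, using the local frames from the proof of Lemma~\ref{lem:closed}.

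\textbf{Degree of \(L\).} I compute \(\deg_X L\) from the meromorphic-type section given by \(\vec u_0\) itself. On \(X\setminus\{[0]\}\), \(\vec u_0\) is a nonvanishing section of \(L\). Near \([0]\) it equals \(z\vec w\), with \(\vec w\) a nonvanishing local frame. So \(\vec u_0\) is a global smooth section of \(L\) whose only zero is at \([0]\), of order one, with local winding number \(+1\) in the frame \(\vec w\). The standard count of zeros with winding multiplicity then gives \(\deg_X L=1\).

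For \(M\), the section \(\vec v_0=\bar z\vec w_M\) vanishes only at \([0]\), now with winding \(-1\), so \(\deg_X M=-1\). As a consistency check, \(E_0\) is a flat bundle of degree \(0\), since \(T_\gamma\) is a constant unitary cocycle, and \(E_0=L\oplus M\) gives \(1+(-1)=0\).

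\textbf{Holomorphic structure on \(L\).} By Lemma~\ref{lem:closed}, \((D(\alpha_*)+k)|_L\) acts in the frames \(\vec u_0\) (away from \(0\)) and \(\vec w\) (near \(0\)) as \(f\mapsto(2D_{\bar z}+k)f\). This is \(-2\mathrm{i}\) times a \(\bar\partial\)-operator on a holomorphic line bundle: the transition function between the two frames is \(z\), which is holomorphic on the overlap. Twisting by the zero-order term \(k\) amounts to conjugating by \(\mathrm{e}^{\mathrm{i}(\bar k z - k\bar z)/2}\)-type factors, or more simply to a compact/zero-order perturbation, which does not change the index. Ellipticity of \(\bar\partial\) on the compact torus gives the Fredholm property.

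\textbf{Riemann--Roch for \(L\).} Riemann--Roch on the genus-one surface \(X\) gives
\[
\ind\bar\partial_L=\deg L+1-g=\deg L=1.
\]
Since the real and holomorphic pictures agree up to the invertible factor \(-2\mathrm{i}\), \(\ind_\C(D(\alpha_*)+k)|_L=1\).

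\textbf{Index for \(M\).} In the frames \(\vec v_0\) and \(\vec w_M\), \((D(\alpha_*)^*+\bar k)|_M\) acts as \(g\mapsto(2D_z+\bar k)g\), an antiholomorphic \(\partial\)-operator. Complex conjugation \(g\mapsto\bar g\) turns it into a \(\bar\partial\)-operator on the conjugate bundle \(\overline M\), whose transition function \(\overline{\bar z}=z\) is holomorphic and whose degree is \(-\deg M=1\). Conjugation is antilinear but preserves complex dimensions of kernels and cokernels, so the index is again \(1\).

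\textbf{Main obstacle.} The delicate point is making this rigorous: checking that the frame-change maps are genuinely holomorphic, so that \(L\) and \(\overline M\) carry holomorphic structures in which the restricted operators are \(\bar\partial\). I would handle it by noting that on the punctured disc \(\vec u_0=z\vec w\) with both sides annihilated by \(D(\alpha_*)\). As a sanity check of sign conventions, I would also verify the index against Lemma~\ref{lem:chiral-zero-mode}(1): \(\vec u_0\) lies in the kernel at \(k=0\), consistent with index \(1\).
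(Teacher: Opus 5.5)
Your proposal is correct and follows essentially the same route as the paper. You read off \(\deg_X L=1\) and \(\deg_X M=-1\) from the winding numbers of the single zeros of \(\vec u_0=z\vec w\) and \(\vec v_0=\bar z\vec w_M\), then apply Riemann--Roch on the genus-one torus, discarding the zeroth-order term \(k\) as an index-preserving perturbation. Your passage to the conjugate bundle \(\overline M\) (holomorphic transition function \(z\), degree \(1\)) simply makes explicit the paper's remark that the anti-Cauchy--Riemann operator \(2D_z\) has index \(-\deg_X M\), and the check \(\deg_X L+\deg_X M=\deg E_0=0\) is a harmless extra.
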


\begin{proof}
By Lemma~\ref{lem:chiral-zero-mode}, \(\vec u_0\) is a global section of
\(L\) with a unique simple zero at \([0]\).  It therefore trivializes \(L\)
on \(X\setminus\{[0]\}\).  In the nonvanishing local frame \(\vec w\) near
zero, it has coefficient \(z\), whose winding number is \(+1\).  Since the
degree is the sum of the local winding numbers of the zeros of a section,
\(\deg_XL=1\).  Likewise, \(\deg_XM=-1\).

On \(L\), \(D(\alpha_*)\) differentiates the scalar coefficient by
\(2D_{\bar z}\), so \((D(\alpha_*)+k)|_L\) is a Cauchy--Riemann operator on
\(L\), plus a zeroth-order multiplication term.  The Riemann--Roch theorem
states that a Cauchy--Riemann operator on a line bundle \(E\) over a compact
genus-\(g\) surface has complex Fredholm index
\[
 \deg_XE+1-g.
\]
Multiplication by a nonzero constant and addition of a zeroth-order term do
not change the index.  Since \(X\) is a torus, \(g=1\), and therefore the
first identity in \eqref{eq:line-indices} follows from \(\deg_XL=1\).

The operator on \(M\) uses \(2D_z\), the anti-Cauchy--Riemann operator, whose
complex index is minus the degree.  Hence the second identity in
\eqref{eq:line-indices} follows from \(-\deg_XM=1\).
\end{proof}

We can now restrict the four-component Hamiltonian to sections whose first
two-component entry lies in \(L\) and whose second lies in \(M\):
\begin{equation}\label{eq:Hres}
 \calH^{\rm res}_{k,\lambda}:
 H^1(\mathbb{C} / \Lambda; L)\oplus H^1(\mathbb{C} / \Lambda; M)\longrightarrow L^2(\mathbb{C} / \Lambda; M)\oplus L^2(\mathbb{C} / \Lambda; L),
\end{equation}
where
\[
 \calH^{\rm res}_{k,\lambda}=
 \begin{pmatrix}
  \lambda W|_L&(D(\alpha_*)^*+\bar k)|_M\\
  (D(\alpha_*)+k)|_L&\lambda W|_M
 \end{pmatrix}.
\]
\begin{lemma}\label{lem:restricted-index}
For every \(k\in\C\) and \(\lambda\in\R\), the operator
\(\calH^{\rm res}_{k,\lambda}\) is Fredholm and
\[
 \ind_\C\calH^{\rm res}_{k,\lambda}=2.
\]
\end{lemma}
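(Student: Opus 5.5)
The idea is to write \(\calH^{\rm res}_{k,\lambda}\) as a first-order elliptic principal part plus a bounded zeroth-order perturbation, and to compute the index of the principal part from Lemma~\ref{lem:line-indices}.

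\textbf{Splitting.} Decompose
\[
 \calH^{\rm res}_{k,\lambda}=\calH^{\rm res}_{k,0}+\lambda\mathcal W,
 \quad
 \mathcal W=\begin{pmatrix}W|_L&0\\0&W|_M\end{pmatrix},
\]
where \(\calH^{\rm res}_{k,0}\) is the off-diagonal operator
\[
 \calH^{\rm res}_{k,0}:
 H^1(X;L)\oplus H^1(X;M)\longrightarrow L^2(X;M)\oplus L^2(X;L),
 \quad
 (s,t)\longmapsto\bigl((D(\alpha_*)^*+\bar k)t,\ (D(\alpha_*)+k)s\bigr).
\]
Up to reordering the target summands (a fixed isomorphism, which does not change the index), this is the direct sum \((D(\alpha_*)+k)|_L\oplus(D(\alpha_*)^*+\bar k)|_M\). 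Lemma~\ref{lem:line-indices} says both summands are Fredholm of index one. Hence \(\calH^{\rm res}_{k,0}\) is Fredholm with index \(1+1=2\).

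\textbf{Perturbation.} By Lemma~\ref{lem:closed}, \(\mathcal W\) is bounded \(L^2(X;L)\oplus L^2(X;M)\to L^2(X;M)\oplus L^2(X;L)\); the \(L^\infty\) bound on the coefficients suffices here, and no smoothness of \(f\) is needed. Composing with the inclusion \(H^1\hookrightarrow L^2\), which is compact by Rellich on the compact torus \(X\), shows that \(\lambda\mathcal W\) is compact from \(H^1(X;L)\oplus H^1(X;M)\) to the target. Fredholmness and the index are stable under compact perturbations, so \(\calH^{\rm res}_{k,\lambda}\) is Fredholm with index \(2\) for every \(k\in\C\) and \(\lambda\in\R\).

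\textbf{Main obstacle.} The delicate point is making sure the restricted operators are genuine operators between the stated Sobolev spaces of sections. This needs two checks:
\begin{enumerate}
\item The \(H^1\) norm on sections of \(L\) and \(M\), defined through the smooth frames \(\vec u_0\) away from \([0]\) and \(\vec w\) near \([0]\) (and similarly for \(M\)), agrees with the \(H^1\) norm inherited from \(E_0\). This follows because \(L\) and \(M\) are smooth subbundles giving a smooth orthogonal splitting, so the orthogonal projections onto them are smooth.
\item The Fredholm statement of Lemma~\ref{lem:line-indices} is taken with respect to exactly these spaces.
\end{enumerate}
With these in place, the argument reduces to additivity of the index and compact stability.
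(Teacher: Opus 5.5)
Your proposal is correct and matches the paper's proof essentially verbatim. The $\lambda=0$ operator is the direct sum of the two index-one operators from Lemma~\ref{lem:line-indices}, and $\lambda\operatorname{diag}(W|_L,W|_M)$ is compact because it factors through the compact inclusion $H^1\hookrightarrow L^2$ followed by the bounded multiplication from Lemma~\ref{lem:closed}; your extra remarks on the Sobolev norms of sections of $L$ and $M$ are a reasonable check that the paper leaves implicit in Lemma~\ref{lem:closed}.
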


\begin{proof}
Lemma~\ref{lem:closed} shows that $\calH_{k,\lambda}^{\rm res}$ is well defined.  At \(\lambda=0\), \(\calH^{\rm res}_{k,0}\) is the direct sum of the two Fredholm operators whose indices are given in \eqref{eq:line-indices}.  Therefore \(\calH^{\rm res}_{k,0}\) is Fredholm, and $\ind_\C\calH^{\rm res}_{k,0}=2$.
By Lemma~\ref{lem:closed},
\[
 \calH^{\rm res}_{k,\lambda}-\calH^{\rm res}_{k,0}
 =\lambda
 \begin{pmatrix}
  W|_L&0\\
  0&W|_M
 \end{pmatrix}
\]
is compact.  Indeed, each diagonal block factors as the compact inclusion
\(H^1(\mathbb{C} / \Lambda; L)\hookrightarrow L^2(\mathbb{C} / \Lambda; L)\) or
\(H^1(\mathbb{C} / \Lambda; M)\hookrightarrow L^2(\mathbb{C} / \Lambda; M)\), followed by the corresponding bounded
\(L^2\) multiplication operator from Lemma~\ref{lem:closed}.
% Therefore, by Atkinson's theorem, $\ind_\C\calH^{\rm res}_{k,\lambda} =\ind_\C\calH^{\rm res}_{k,0}=2$.
Therefore $\ind_\C\calH^{\rm res}_{k,\lambda} =\ind_\C\calH^{\rm res}_{k,0}=2$, since compact perturbations do not change the Fredholm index.
\end{proof}

\begin{proof}[Proof of Theorem~\ref{thm:exact}]
By Lemma~\ref{lem:restricted-index}, $\dim_\C\ker\calH^{\rm res}_{k,\lambda} \geq\ind_\C\calH^{\rm res}_{k,\lambda}=2$.
Every restricted zero mode is also a zero mode of the full operator
\eqref{eq:H}, which proves the theorem.
\end{proof}

\begin{remark}\label{rem:first-order-cancellation}
The first-order cancellation can also be seen directly from the perturbative
formula of Becker--Zworski.  In our notation, their expansion of the two
central-band energies is
\begin{equation}\label{eq:first-order-band-expansion}
 E_m(\alpha_*,\lambda,k)
 =\lambda a_V(k)+\mathcal O(\lambda^2),
 \quad m\in\{-1,1\},\quad \lambda\to0.
\end{equation}
The coefficient \(a_V(k)\) here is the same scalar defined later by
\(A_V(k)=a_V(k)I_2\) in \eqref{eq:scalarA}.  In terms of the scalar Bloch
multiplier \(F_k\) and the normalizing constant \(\nu_k>0\) introduced in
Lemma~\ref{lem:theta-zero-modes}, it takes the form
\[
 a_V(k)=-2\eps\nu_k^2\int_X |F_k(z)|^2
 \operatorname{Im}\!\left(V(z)p(-z)\overline{p(z)}\right)\,\mathrm{d}m(z),
\]
as derived in \eqref{eq:aformula} below; see also
\cite[Eq.~(3.9)]{BeckerZworski}.  Here \(F_k\) is defined explicitly in
\eqref{eq:Fk}, and we set \(\mathrm{d}m(z)=dx\,dy\).
For \(V=V[f]\), the expression inside the imaginary part is
\(f(z)|p(z)p(-z)|^2\), which is real.  Hence
\(a_V(k)=0\) for every \(k\).
This provides a useful first-order consistency check, while
Theorem~\ref{thm:exact} gives the stronger, nonperturbative conclusion that
two zero-energy flat bands persist for every real \(\lambda\).
Conversely, if two zero-energy flat bands persist for every
real \(\lambda\), then \eqref{eq:first-order-band-expansion} forces
\(a_V(k)=0\) for every \(k\).  Theorem~\ref{thm:characterization} makes this
condition precise: the admissible potentials satisfying \(a_V(k)=0\) for
every \(k\) are exactly those of the form \(V=V[f]\) with
\(f\in\mathcal F_{\rm sym}\).
\end{remark}

\section{Invertibility criterion and nonexact flatness}
\label{sec:nonflatness}

Recall that we have defined
\[
 \mathcal Z_k=\ker H_k(\alpha_*,0),
 \quad
 P_k:\mathscr H_0\longrightarrow\mathcal Z_k
\]
as the two-dimensional chiral zero-mode space and the orthogonal projection
onto it.  Set \(Q_k=I-P_k\).  Recall also that
\(A_V(k)=P_k\mathbf W_VP_k\) is a linear operator on \(\mathcal Z_k\).

For each \(k\), choose normalized zero modes
\[
 \vec u_k\in\ker(D(\alpha_*)+k),
 \quad
 \vec v_k\in\ker(D(\alpha_*)^*+\bar k).
\]
Then \((\vec u_k,\boldsymbol 0)\) and \((\boldsymbol 0,\vec v_k)\) form an orthonormal basis of
\(\mathcal Z_k\).  Because \(\mathbf W_V\) is block diagonal,
\(A_V(k)\) is diagonal in this basis.  The real-model symmetries
imply that its two diagonal entries are equal
\cite[Lemma~4]{BeckerZworski}; hence
\begin{equation}\label{eq:scalarA}
 A_V(k)=a_V(k)I_2,
 \quad
 a_V(k)=\ip{\vec u_k}{W_V\vec u_k}_{L^2_0}.
\end{equation}
This is the same scalar \(a_V(k)\) that appears as the common first-order
coefficient in the perturbative expansion
\eqref{eq:first-order-band-expansion} of
Remark~\ref{rem:first-order-cancellation}.   

The invertibility criterion, as stated in
\eqref{eq:invertibility-criterion}, assumes that \(A_V(k_0)\) is invertible,
or equivalently \(a_V(k_0)\neq0\), for some \(k_0\). Now we prove that this
criterion implies nonexact flatness of the two central bands for small
nonchiral couplings \(\lambda\).

\begin{proof}[Proof of Proposition~\ref{prop:compression}]
Recall that \(\mathscr H_0=L^2_0\oplus L^2_0\) is the Bloch-fiber Hilbert
space and \(\mathscr D_0=H^1_0\oplus H^1_0\) is the domain of
\(H_{k_0}(\alpha_*,\lambda)\).
Since \(H_{k_0}(\alpha_*,0)\) is self-adjoint with compact resolvent, its
restriction
\[
 B_{k_0}(0):=Q_{k_0}H_{k_0}(\alpha_*,0)Q_{k_0}:
 Q_{k_0}\mathscr D_0\longrightarrow Q_{k_0}\mathscr H_0
\]
is invertible.  Because \(\mathbf W_V\) is bounded, the perturbed block
\[
 B_{k_0}(\lambda):=Q_{k_0}H_{k_0}(\alpha_*,\lambda)Q_{k_0}
 =B_{k_0}(0)+\lambda Q_{k_0}\mathbf W_VQ_{k_0}
\]
remains invertible for small \(|\lambda|\); write
\(R_{k_0}(\lambda)=B_{k_0}(\lambda)^{-1}\).  These inverses are uniformly
bounded near \(\lambda=0\).

With respect to the decompositions
\[
 \mathscr D_0=\mathcal Z_{k_0}\oplus Q_{k_0}\mathscr D_0,
 \quad
 \mathscr H_0=\mathcal Z_{k_0}\oplus Q_{k_0}\mathscr H_0,
\]
the Schur complement of \(B_{k_0}(\lambda)\) is
\begin{equation}\label{eq:Feshbach}
 S_{V,k_0}(\lambda)
 =\lambda A_V(k_0)
 -\lambda^2P_{k_0}\mathbf W_VQ_{k_0}R_{k_0}(\lambda)
 Q_{k_0}\mathbf W_VP_{k_0}.
\end{equation}
The Schur-complement criterion shows that
\(H_{k_0}(\alpha_*,\lambda)\) is invertible if and only if
\(S_{V,k_0}(\lambda)\) is invertible.  For \(\lambda\neq0\),
\[
 \lambda^{-1}S_{V,k_0}(\lambda)
 =A_V(k_0)-\lambda P_{k_0}\mathbf W_VQ_{k_0}R_{k_0}(\lambda)
 Q_{k_0}\mathbf W_VP_{k_0}.
\]
As $\lambda \rightarrow 0$, the right-hand side converges in operator norm to the invertible operator
\(A_V(k_0)\).  It is therefore invertible for all sufficiently small
\(|\lambda|\), proving that \(H_{k_0}(\alpha_*,\lambda)\) has no zero
eigenvalue when \(0<|\lambda|<\delta\).

It remains to turn the absence of a zero mode at \(k_0\) into nonflatness.
At the Dirac momenta, the protected-state theorem
\cite[Proposition~2]{BeckerZworski} gives
\[
 \dim\ker H_{\pm K}(\alpha_*,\lambda)\ge2
 \quad(\lambda\in\R).
\]
Thus the central bands pass through zero at \(\pm K\) but not at \(k_0\).
They cannot be flat.
\end{proof}

The perturbative expansion in Remark~\ref{rem:first-order-cancellation},
Eq.~\eqref{eq:first-order-band-expansion}, is given by the main theorem of
\cite[Eq.~(1.2)]{BeckerZworski} and yields the same conclusion directly.
Since
\(A_V(k_0)=a_V(k_0)I_2\), invertibility of \(A_V(k_0)\) means that
\(a_V(k_0)\neq0\).  Equation~\eqref{eq:first-order-band-expansion} then gives
\[
 E_m(\alpha_*,\lambda,k_0)
 =\lambda a_V(k_0)+\mathcal O(\lambda^2)\neq0,
 \quad m\in\{-1,1\},
\]
for all sufficiently small nonzero \(\lambda\).  Together with the protected
zeros at \(k=\pm K\), this again proves nonflatness.

We next prove Theorem~\ref{thm:discrete-couplings}, which upgrades the local
perturbative obstruction in Proposition~\ref{prop:compression} to a
conclusion over the full coupling axis.

\begin{proof}[Proof of Theorem~\ref{thm:discrete-couplings}]
Proposition~\ref{prop:compression} provides a nonzero real number
\(\lambda_0\) for which \(H_{k_0}(\alpha_*,\lambda_0)\) is invertible.  For
complex \(\lambda\), the operators
\[
 H_{k_0}(\alpha_*,\lambda)
 =H_{k_0}(\alpha_*,0)+\lambda\mathbf W_V:
 \mathscr D_0\longrightarrow\mathscr H_0
\]
form an analytic Fredholm family.  Indeed,
\[
 H_{k_0}(\alpha_*,\lambda)H_{k_0}(\alpha_*,\lambda_0)^{-1}
 =I+(\lambda-\lambda_0)\mathbf W_V
 H_{k_0}(\alpha_*,\lambda_0)^{-1},
\]
and the operator after \(I\) is compact: the inverse maps into
\(\mathscr D_0\), whose inclusion in \(\mathscr H_0\) is compact, while
\(\mathbf W_V\) is bounded.  The analytic Fredholm theorem
\cite[Theorem~VI.14]{ReedSimonI} therefore shows that
\[
 \Sigma_V:=\{\lambda\in\R:
 H_{k_0}(\alpha_*,\lambda)\text{ is not invertible}\}
\]
is discrete and has no finite accumulation point.

As observed in the proof of Proposition~\ref{prop:compression}, exact
flatness of a central band would make
\(H_{k_0}(\alpha_*,\lambda)\) noninvertible.  Hence exact flatness can occur only when \(\lambda\in\Sigma_V\).
\end{proof}

We now apply the invertibility criterion to the standard  BM
model.  Although it retains only the leading Fourier harmonics of the
tunnelling potentials, it is the most widely used continuum model of twisted
bilayer graphene and a standard starting point for many-body calculations.
To verify the criterion, we need to evaluate the scalar \(a_V(k)\).  Recall that the normalized zero mode
at \(k=0\) is
\[
 \vec u_0(z)=
 \begin{pmatrix}p(z)\\ \eps\mathrm{i}p(-z)\end{pmatrix},
 \quad \eps\in\{+1,-1\}.
\]
To express the zero mode at momentum \(k\) in terms of the one at momentum
zero, we need a scalar multiplier that is compatible with the lattice
periodicity and shifts the equation from \(D(\alpha_*)\) to
\(D(\alpha_*)+k\).  The standard construction of such a multiplier on the
complex torus uses a Jacobi theta function; its apparent pole at \(0\) is
cancelled by the zero of \(\vec u_0\).  We recall only the formula needed
below.

Let \(\vartheta(z)=\vartheta_1(z\mid\omega)\) be the odd Jacobi theta
function, normalized by
\begin{align*}
 \vartheta(z+m)&=(-1)^m\vartheta(z),\\
 \vartheta(z+n\omega)&=(-1)^n
 \mathrm{e}^{-\pi\mathrm{i}n^2\omega-2\pi\mathrm{i}nz}\vartheta(z),
 \quad m,n\in\Z.
\end{align*}
Its zeros are simple and occur precisely at the points of \(\Lambda\).  Set
\begin{equation}\label{eq:Fk}
 z(k)=\frac{\sqrt3\,k}{4\pi\mathrm{i}},
 \quad
 F_k(z)=\mathrm{e}^{\frac{\mathrm{i}}2(z-\bar z)k}
 \frac{\vartheta(z-z(k))}{\vartheta(z)}.
\end{equation}

The following standard construction is proved in
\cite[Eqs.~(3.3)--(3.5) and the discussion following Lemma~3.2]{BHZfine};
see also \cite[Eqs.~(3.6)--(3.8)]{BeckerZworski}.  We include the short proof
for completeness and to match our conventions.

\begin{lemma}[Theta-function formula for the zero modes]
\label{lem:theta-zero-modes}
The function \(F_k\) in \eqref{eq:Fk} is \(\Lambda\)-periodic, and
\(F_k\vec u_0\) extends smoothly across \(z=0\) and satisfies
\[
 (D(\alpha_*)+k)(F_k\vec u_0)=0.
\]
Consequently, the normalized zero mode at momentum \(k\) may be chosen as
\[
 \vec u_k(z)=\nu_kF_k(z)\vec u_0(z),
 \quad \nu_k>0.
\]
\end{lemma}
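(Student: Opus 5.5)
The plan is to verify the three claims directly from the quasi-periodicity of \(\vartheta\) and the Leibniz rule already used in the proof of Lemma~\ref{lem:closed}. No new machinery beyond Lemma~\ref{lem:chiral-zero-mode} is needed.

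\emph{Periodicity.} I check the two generators \(1\) and \(\omega\) of \(\Lambda\) separately.

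Under \(z\mapsto z+1\), the quantity \(z-\bar z\) is unchanged. The ratio \(\vartheta(z-z(k))/\vartheta(z)\) acquires the factor \((-1)/(-1)=1\), so \(F_k\) is invariant.

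Under \(z\mapsto z+\omega\), the theta transformation law gives the ratio the factor
\[
 \frac{\mathrm{e}^{-2\pi\mathrm{i}(z-z(k))}}{\mathrm{e}^{-2\pi\mathrm{i}z}}=\mathrm{e}^{2\pi\mathrm{i}z(k)}=\mathrm{e}^{\sqrt3 k/2}.
\]
Meanwhile \(z-\bar z\) shifts by \(\omega-\bar\omega=\mathrm{i}\sqrt3\), so the exponential prefactor acquires \(\mathrm{e}^{\frac{\mathrm{i}}{2}\mathrm{i}\sqrt3 k}=\mathrm{e}^{-\sqrt3k/2}\). The two factors cancel; this is exactly why \(z(k)\) is normalized as \(\sqrt3k/(4\pi\mathrm{i})\). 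Hence \(F_k\) is \(\Lambda\)-periodic. Consequently \(F_k\vec u_0\) satisfies the same twisted boundary condition \eqref{eq:twisted-boundary} as \(\vec u_0\).

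\emph{Smoothness.} The only possible singularities of \(F_k\) are the simple poles of \(1/\vartheta\) at the points of \(\Lambda\). By periodicity it suffices to treat \(z=0\). There \(\vartheta(z)=z\,h(z)\) with \(h\) holomorphic and \(h(0)\ne0\). By \eqref{eq:u0-local-factor}, \(\vec u_0=z\vec w\), so near \(0\)
\[
 F_k\vec u_0=\mathrm{e}^{\frac{\mathrm{i}}2(z-\bar z)k}\,\vartheta(z-z(k))\,\vec w(z)/h(z),
\]
which is smooth. The same holds near every lattice point, using the quasi-periodicity of \(\vec u_0\).

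\emph{Equation.} Away from \(\Lambda\), the Leibniz rule together with \(D(\alpha_*)\vec u_0=0\) gives
\[
 D(\alpha_*)(F_k\vec u_0)=2(D_{\bar z}F_k)\vec u_0=-2\mathrm{i}(\partial_{\bar z}F_k)\vec u_0 .
\]
The theta ratio is holomorphic there, and \(\partial_{\bar z}\mathrm{e}^{\frac{\mathrm{i}}2(z-\bar z)k}=-\frac{\mathrm{i}k}{2}\mathrm{e}^{\frac{\mathrm{i}}2(z-\bar z)k}\). Hence \(\partial_{\bar z}F_k=-\frac{\mathrm{i}k}{2}F_k\), and therefore \(D(\alpha_*)(F_k\vec u_0)=-kF_k\vec u_0\). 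Thus \((D(\alpha_*)+k)(F_k\vec u_0)=0\) on \(\C\setminus\Lambda\). Since \(F_k\vec u_0\) is smooth, the identity extends to all of \(\C\) by continuity.

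\emph{Normalization.} The section \(F_k\vec u_0\) is not identically zero, since \(\vartheta(z-z(k))\) vanishes only on a discrete set. So it lies in the one-dimensional kernel given by Lemma~\ref{lem:chiral-zero-mode}(1), and dividing by its \(L^2_0\)-norm yields \(\nu_k>0\).

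The only delicate step is the cancellation in the \(\omega\)-direction. It is a bookkeeping check of the theta normalization against the definition of \(z(k)\), so I expect no genuine obstacle.
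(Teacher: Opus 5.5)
Your proposal is correct and follows the same route as the paper: periodicity from the theta transformation laws (which you check explicitly on the generators $1$ and $\omega$), cancellation of the pole of $1/\vartheta$ by the simple zero of $\vec u_0$, the Leibniz rule with $D(\alpha_*)\vec u_0=0$, and one-dimensionality of the kernel for the normalization. The only difference is at $z=0$. The paper computes $(2D_{\bar z}+k)F_k$ distributionally as a multiple of $\delta_0$ and removes it using $\vec u_0(0)=0$, whereas you verify the equation on $\C\setminus\Lambda$ and extend it by continuity from the already established smoothness of $F_k\vec u_0$, which is equally valid and slightly more elementary.
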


\begin{proof}
The theta transformation laws make \(F_k\) periodic.  Its possible pole at
\(0\) is cancelled by the simple zero of \(\vec u_0\) from
Lemma~\ref{lem:chiral-zero-mode}, so \(F_k\vec u_0\) extends smoothly there.
Moreover, in the sense of distributions,
\[
 (2D_{\bar z}+k)F_k
 =2\pi\mathrm{i}\frac{\vartheta(z(k))}{\vartheta'(0)}\delta_0.
\]
Multiplication by \(\vec u_0\) annihilates the delta term because
\(\vec u_0(0)=0\).  The product rule and
\(D(\alpha_*)\vec u_0=0\) therefore give the claimed equation.  Since the
kernel of \(D(\alpha_*)+k\) is one-dimensional, normalization completes the
proof.  This is the standard theta-function construction of chiral zero
modes \cite[Sec.~3.1]{BHZfine}.
\end{proof}

Direct substitution of Lemma~\ref{lem:theta-zero-modes} into
\eqref{eq:scalarA} gives
\begin{equation}\label{eq:aformula}
 a_V(k)=-2\eps \nu_k^2\int_X |F_k(z)|^2
 \operatorname{Im}\!\left(V(z)p(-z)\overline{p(z)}\right)\,
 \mathrm{d}m(z).
\end{equation}
Since \(-2\eps\nu_k^2\neq0\), the matrix \(A_V(k)\) is invertible if and
only if the integral in \eqref{eq:aformula} is nonzero.  The next lemma
shows that if this integral vanishes for every \(k\), then
\(\operatorname{Im}\!\left(V(z)p(-z)\overline{p(z)}\right)=0\) almost
everywhere on \(X\).

\begin{lemma}[The theta averages determine the integrand]
\label{lem:theta-averages}
For an admissible nonchiral tunnelling potential \(V\), if \(a_V(k)=0\) for
every \(k\), then
\(\operatorname{Im}\!\left(V(z)p(-z)\overline{p(z)}\right)=0\) almost
everywhere on \(X\).  When \(V\) is continuous, this equality holds
everywhere.
\end{lemma}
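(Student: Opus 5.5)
The plan is to show that the functions \(|F_k|^2\), \(k\in\C\), span a dense subspace of a space of test functions rich enough to detect \(g(z):=\operatorname{Im}\bigl(V(z)p(-z)\overline{p(z)}\bigr)\).

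First, \(g\in L^\infty(X)\), since \(V\in L^\infty(X)\) and \(p\) is smooth, so all integrals \(\int_X|F_k|^2g\,\mathrm dm\) are finite. The singularity of \(|F_k|^2\) at \(0\) is \(|z|^{-2}\), but \(g\) vanishes to second order there because \(|p(z)p(-z)|\) does. It is therefore cleaner to work with the bounded real-analytic functions \(h_k:=|F_k|^2|p(z)|\,|p(-z)|\) against \(\tilde g:=g/(|p||p(-\cdot)|)\). Simpler still, keep \(G:=|F_k|^2 g\) directly.

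By hypothesis and \eqref{eq:aformula}, with \(\nu_k\neq0\),
\[
\Phi(k):=\int_X|F_k(z)|^2g(z)\,\mathrm dm(z)=0\quad\text{for all }k\in\C .
\]

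The key step is to make \(\Phi\) real-analytic in \((k,\bar k)\) and then polarize. Write
\[
|F_k|^2=\mathrm e^{\frac{\mathrm i}{2}(z-\bar z)(k-\bar k)}\,\frac{\vartheta(z-z(k))\,\overline{\vartheta(z-z(k))}}{|\vartheta(z)|^2}.
\]
Replace \(\bar k\) by an independent variable \(\ell\) to obtain
\[
\Psi(k,\ell)=\int_X \mathrm e^{\frac{\mathrm i}{2}(z-\bar z)(k-\ell)}\,\frac{\vartheta(z-z(k))\,\overline{\vartheta(\bar z-\overline{z(\bar\ell)})}}{|\vartheta(z)|^2}\,g\,\mathrm dm .
\]
Here the second theta factor is an entire function of \(\ell\) (using \(\overline{\vartheta(\bar w)}=\vartheta(w)\)), so \(\Psi\) is holomorphic on \(\C^2\). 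Since \(\Psi\) vanishes on the totally real set \(\ell=\bar k\), it vanishes identically.

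From \(\Psi\equiv0\) we then extract Fourier information. Set
\[
\Theta(z;k,\ell)=\vartheta(z-z(k))\,\overline{\vartheta(z-z(\bar \ell))}\,\mathrm e^{\frac{\mathrm i}{2}(z-\bar z)(k-\ell)}\,|\vartheta(z)|^{-2}.
\]
This combination need not be periodic in \(z\) for independent \(k,\ell\). However, \(\Psi\) was defined as an integral over a fundamental domain, so it is still holomorphic and vanishes identically. We then Taylor expand in \(k,\ell\) at \(k=\ell=0\). The derivatives \(\partial_k^a\partial_\ell^b\) at \(0\) produce integrals of \(g\) against
\[
\frac{P_a(z,\bar z)\,\overline{P_b(z,\bar z)}}{1}
\]
times \(\vartheta\)-derivative ratios. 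At leading order these are polynomials in \(z-\bar z\) times \((\log\vartheta)^{(j)}(z)\) and its conjugate. The cleaner alternative is to restrict to \(k-\ell\in\Lambda^*\)-shifts. Using \(F_{k+G}=\mathrm e^{\mathrm i\langle z,G\rangle}F_k\) up to a constant, the identity \(\Psi(k+G,k)=0\) for all \(k\) and \(G\) gives
\[
\int_X \mathrm e^{\mathrm i\langle z,G\rangle}|F_k|^2 g\,\mathrm dm=0\quad\text{for all }G\in\Lambda^*,
\]
provided \(F_{k+G}\overline{F_k}\) is periodic, which the transformation laws suggest. This says that all Fourier coefficients of the \(L^1\) function \(|F_k|^2g\) vanish. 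Hence \(|F_k|^2g=0\) almost everywhere, and since \(F_k\) vanishes only at the single point \(z(k)\) modulo \(\Lambda\), we get \(g=0\) almost everywhere. If \(V\) is continuous, \(g\) is continuous and therefore vanishes everywhere.

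The main obstacle is the precise bookkeeping in the shift identity: one must check that \(F_{k+G}\overline{F_k}\) is \(\Lambda\)-periodic and equals \(c\,\mathrm e^{\mathrm i\langle z,G\rangle}|F_k|^2\), or more generally a trigonometric multiple, using the theta quasi-periodicity and the \(\mathrm e^{\frac{\mathrm i}{2}(z-\bar z)k}\) prefactor. It must also be checked that \(\Psi(k,\ell)\) is genuinely holomorphic in \(\ell\) with locally uniform integrability near \(z=0\). The integrand has a \(|z|^{-2}\) singularity against \(g=O(|z|^2)\), so it is bounded, and dominated convergence handles this.
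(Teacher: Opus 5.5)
Your route is genuinely different from the paper's, and once the slips below are repaired it proves the lemma.

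The paper writes $|F_k|^2=\mathrm e^{2\pi(\operatorname{Im}w)^2/\tau_2}\,\Xi(z-w)/\Xi(z)$, where $\Xi(z)=\mathrm e^{-2\pi(\operatorname{Im}z)^2/\tau_2}|\vartheta(z)|^2$ is a periodic theta density. The hypothesis then becomes the convolution identity $\Xi*(g/\Xi)=0$ on $X$. The paper finishes by checking from the theta series that every Fourier coefficient of $\Xi$ is a nonzero Gaussian.

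You instead polarize in the momentum. Since $k\mapsto F_k(z)$ is holomorphic, $\Psi(k,\ell):=\int_X F_k\,\overline{F_{\bar\ell}}\,g\,\mathrm dm$ is entire on $\C^2$. The integrand is locally uniformly bounded because $|g|\le\|V\|_\infty|p(z)p(-z)|$ vanishes to second order on $\Lambda$. $\Psi$ vanishes on the totally real set $\ell=\bar k$, hence identically, and its off-diagonal values at $\Lambda^*$-separated momenta are Fourier coefficients.

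The covariance you single out as the main obstacle is immediate. One has $(D(\alpha_*)+k+G)(\mathrm e^{-\mathrm i\ip{z}{G}}u)=\mathrm e^{-\mathrm i\ip{z}{G}}(D(\alpha_*)+k)u$, and the kernel is one-dimensional. Hence $F_{k+G}=C\,\mathrm e^{-\mathrm i\ip{z}{G}}F_k$ with $C\neq0$; the sign of the exponent is immaterial. Every $F_k$ with $k\in\C$ is already $\Lambda$-periodic, so your periodicity worry is moot. You can also simply take $k=0$: since $F_0\equiv1$, you get $\Psi(G,0)=C\int_X\mathrm e^{-\mathrm i\ip{z}{G}}g\,\mathrm dm=0$ for all $G\in\Lambda^*$. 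This gives $g=0$ a.e.\ directly, with no need to discuss the zeros of $F_k$.

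As for what each approach buys: yours is soft and structural, using only holomorphy of the zero-mode multiplier in $k$ and its $\Lambda^*$-covariance, and it avoids the Gaussian computation of $\widehat\Xi$. The paper's stays on the physical diagonal and exhibits an explicit nonvanishing Fourier multiplier relating $g/\Xi$ to $a_V$, which is the form one would need for any quantitative version.

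In the write-up, fix the displayed formulas, which as written are not the polarization.
\begin{itemize}
\item Since $\frac{\mathrm i}{2}(z-\bar z)=-\operatorname{Im}z$ is real, $|F_k|^2$ carries the factor $\mathrm e^{\frac{\mathrm i}{2}(z-\bar z)(k+\bar k)}$, not $(k-\bar k)$.
\item Your second factor $\overline{\vartheta(\bar z-\overline{z(\bar\ell)})}=\overline{\vartheta(\overline{z-z(\bar\ell)})}$ is a constant multiple of $\vartheta(z-z(\bar\ell))$. (The identity $\overline{\vartheta(\bar w)}=\vartheta(w)$ holds here only up to a unimodular constant, since the nome $\mathrm e^{\pi\mathrm i\omega}$ is nonreal.) So this factor is antiholomorphic in $\ell$, and on $\ell=\bar k$ it produces a multiple of $\vartheta(z-z(k))^2$ instead of $|\vartheta(z-z(k))|^2$.
\item The correct factor is $\overline{\vartheta(z-z(\bar\ell))}$, which the definition $\int_X F_k\overline{F_{\bar\ell}}\,g\,\mathrm dm$ supplies automatically.
\item The evaluation point is $(k+G,\bar k)$, not $(k+G,k)$.
\end{itemize}
The Taylor-expansion paragraph can be deleted. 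Also, do not use the letter $G$ both for $|F_k|^2g$ and for elements of $\Lambda^*$.
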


\begin{proof}
Write \(\omega=\tau_1+\mathrm{i}\tau_2\), where
\(\tau_1=-\tfrac12\) and \(\tau_2=\tfrac{\sqrt3}{2}\), and define
\begin{equation}\label{eq:theta-density}
 \Xi(z):=\mathrm{e}^{-2\pi(\operatorname{Im}z)^2/\tau_2}
 |\vartheta(z)|^2.
\end{equation}
This is periodic by the transformation laws of \(\vartheta\).  For
\(w=z(k)\), the relation \(k=2\pi\mathrm{i}w/\tau_2\) and
\eqref{eq:Fk} give
\begin{equation}\label{eq:Fk-density}
 |F_k(z)|^2=
 \mathrm{e}^{2\pi(\operatorname{Im}w)^2/\tau_2}
 \frac{\Xi(z-w)}{\Xi(z)}.
\end{equation}

The function \(\Xi\) has a zero of order two at \([0]\) and no other zero
on \(X\).  Lemma~\ref{lem:chiral-zero-mode} gives
\(p(z),p(-z)=O(|z|)\) there.  It follows that
\[
 f_V(z):=
 \frac{\operatorname{Im}\!\left(V(z)p(-z)\overline{p(z)}\right)}{\Xi(z)}
\]
belongs to \(L^2(X)\).  If \(a_V(k)=0\) for all \(k\), equations
\eqref{eq:aformula} and \eqref{eq:Fk-density} imply
\begin{equation}\label{eq:theta-convolution}
 \int_X\Xi(z-w)f_V(z)\,\mathrm{d}m(z)=0
 \quad(w\in X),
\end{equation}
because \(k\mapsto z(k)\) maps the Bloch torus onto \(X\).

None of the Fourier coefficients of \(\Xi\) vanishes.  To see this, write
\(z=x+y\omega\) and use the characters
\(\mathrm{e}^{2\pi\mathrm{i}(rx+sy)}\), \((r,s)\in\Z^2\).  Substitution of
the theta series
\[
 \vartheta_1(z\mid\omega)
 =-\mathrm{i}\sum_{n\in\Z}(-1)^n
 \mathrm{e}^{\pi\mathrm{i}\omega(n+1/2)^2
             +2\pi\mathrm{i}(n+1/2)z}
\]
into \eqref{eq:theta-density} gives
\begin{equation}\label{eq:theta-Fourier}
 \widehat\Xi(r,s)=c_\vartheta\,\sigma_{r,s}
 \exp\!\left(-\frac{\pi\tau_2r^2}{2}
 -\frac{\pi(s-\tau_1r)^2}{2\tau_2}\right),
 \quad \sigma_{r,s}\in\{-1,1\},
\end{equation}
where \(c_\vartheta>0\) depends only on the normalization of area and of the
theta function.  Indeed, the integration in \(x\) fixes the difference of
the two summation indices to be \(r\).  The remaining sum and the integration
in \(y\) unfold to
\[
 \int_\R \mathrm{e}^{-2\pi\tau_2u^2}
 \mathrm{e}^{2\pi\mathrm{i}(r\tau_1-s)u}\,\mathrm{d}u,
\]
which proves \eqref{eq:theta-Fourier}.  Taking Fourier coefficients in
\eqref{eq:theta-convolution} now gives \(\widehat f_V(r,s)=0\) for every
\((r,s)\).  Thus \(f_V=0\), and hence
\(\operatorname{Im}\!\left(V(z)p(-z)\overline{p(z)}\right)=0\), almost
everywhere.  The last statement follows from continuity.
\end{proof}

Combining \eqref{eq:scalarA}, \eqref{eq:aformula}, and
Lemma~\ref{lem:theta-averages}, we obtain
\begin{equation}\label{eq:vanishing-compression}
 \begin{aligned}
 A_V(k)=0\ \text{for every }k
 &\quad\Longleftrightarrow\quad
 a_V(k)=0\ \text{for every }k\\
 &\quad\Longleftrightarrow\quad
 \operatorname{Im}\!\left(V(z)p(-z)\overline{p(z)}\right)=0
 \quad\text{a.e. on }X.
 \end{aligned}
\end{equation}
Indeed, the first equivalence follows from \(A_V(k)=a_V(k)I_2\), while the
second follows from \eqref{eq:aformula} and
Lemma~\ref{lem:theta-averages}.  Consequently, the invertibility criterion
holds precisely when
\(\operatorname{Im}(V(z)p(-z)\overline{p(z)})\) is nonzero on a set of
positive measure in \(X\).

This equivalence allows us to characterize exactly what
potentials exhibit exactly flat bands for every Bloch momentum and every real nonchiral coupling.

\begin{theorem}[Characterization of exact flatness]
\label{thm:characterization}
Let \(U\) be a nonzero real-analytic tunnelling potential satisfying
\eqref{eq:Usym}, let \(\alpha_*\) be a simple real magic parameter for
\(U\), and let \(V\) be an admissible nonchiral tunnelling potential.  Then
\[
 \dim_\C\ker H_k(\alpha_*,\lambda)\geq2
 \quad(k\in\C/\Lambda^*,\ \lambda\in\R)
\]
if and only if \(V=V[f]=fV_0\) for some
\(f\in\mathcal F_{\rm sym}\).  Equivalently, these conditions hold if and
only if
\[
 A_V(k)=0\quad(k\in\C/\Lambda^*).
\]
\end{theorem}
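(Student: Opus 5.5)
The plan is to prove the cycle of implications
\[
 \text{(flat)}\ \Longrightarrow\ A_V\equiv0\ \Longrightarrow\ V=fV_0\ \Longrightarrow\ \text{(flat)},
\]
where ``flat'' means \(\dim_\C\ker H_k(\alpha_*,\lambda)\ge2\) for all \(k\) and all real \(\lambda\).

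\textbf{Step 1: \(V=fV_0\) implies flatness.} If \(f\neq0\), this is exactly Theorem~\ref{thm:exact}. If \(f=0\), then \(V\equiv0\). In that case \(H_k(\alpha_*,\lambda)=H_k(\alpha_*,0)\), which has exactly two zero modes by Lemma~\ref{lem:chiral-zero-mode}(1).

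\textbf{Step 2: flatness implies \(A_V\equiv0\).} I argue by contraposition. Suppose \(A_V(k_0)\neq0\) for some \(k_0\). Since \(A_V(k_0)=a_V(k_0)I_2\) by \eqref{eq:scalarA}, this matrix is invertible, so \(V\) satisfies the invertibility criterion with witnessing momentum \(k_0\). Proposition~\ref{prop:compression} then says that \(H_{k_0}(\alpha_*,\lambda)\) has trivial kernel for all small \(\lambda\neq0\). This contradicts flatness at \(k_0\).

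\textbf{Step 3: \(A_V\equiv0\) implies \(V=fV_0\).} This is the substantive direction.

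By \eqref{eq:vanishing-compression}, the hypothesis gives that \(g(z):=V(z)p(-z)\overline{p(z)}\) is real almost everywhere. Since \(p\) is real analytic and not identically zero, the set
\[
 N=\{z: p(z)p(-z)=0\}
\]
is a closed real-analytic proper subvariety, hence of measure zero. It is invariant under \(\Lambda\)-translations, \(z\mapsto\omega z\), \(z\mapsto -z\) and \(z\mapsto\bar z\), by \eqref{eq:p-sym} and \eqref{eq:u0-reflection}.

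Define \(f:=g/|p(z)p(-z)|^2\) on \(\C\setminus N\) and \(f:=0\) on \(N\). This \(f\) is measurable and real valued. Off \(N\) we have
\[
 f(z)\,p(z)\overline{p(-z)}=\frac{V(z)\,|p(z)|^2|p(-z)|^2}{|p(z)p(-z)|^2}=V(z),
\]
so \(V=fV_0\) almost everywhere. In particular \(fV_0=V\in L^\infty(X)\).

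It remains to verify the symmetries \eqref{eq:f-sym}. Since the denominator \(|p(z)p(-z)|^2\) is invariant under all four operations, it suffices to check the numerator \(g\).
\begin{itemize}
\item Translation by \(\gamma\in\Lambda\): \(V\), \(p(-\cdot)\) and \(\overline{p}\) each acquire the factor \(\mathrm{e}^{\mathrm{i}\ip{\gamma}{K}}\). The total factor is \(\mathrm{e}^{3\mathrm{i}\ip{\gamma}{K}}=1\).
\item Rotation \(z\mapsto\omega z\): \(V\) is invariant, and the factors \(\omega\) and \(\bar\omega\) coming from \(p(-\cdot)\) and \(\overline{p}\) cancel.
\item Inversion \(z\mapsto -z\): using \(V(-z)=\overline{V(z)}\), one gets \(g(-z)=\overline{g(z)}\). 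This equals \(g(z)\) because \(g\) is real almost everywhere.
\item Reflection \(z\mapsto\bar z\): by \eqref{eq:u0-reflection}, \(p(-\bar z)=\bar\rho\,\overline{p(z)}\) and \(\overline{p(\bar z)}=\rho\,p(-z)\). Since \(|\rho|=1\) and \(V(\bar z)=V(z)\), this gives \(g(\bar z)=g(z)\).
\end{itemize}
Hence \(f\in\mathcal F_{\rm sym}\).

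The final equivalence in Theorem~\ref{thm:characterization}, with the condition \(A_V\equiv0\), is contained in this cycle.

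\textbf{Main obstacle.} The delicate point is the a.e. bookkeeping in Step 3. One must know that \(N\) is a null set invariant under the symmetry group, so that dividing by \(|p(z)p(-z)|^2\) and then checking symmetry relations almost everywhere is legitimate. This follows from the real analyticity of \(p\) given in Lemma~\ref{lem:chiral-zero-mode}.
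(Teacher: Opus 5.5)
Your proposal is correct and follows the paper's proof essentially step for step. You use the same cycle of implications: Theorem~\ref{thm:exact} together with Lemma~\ref{lem:chiral-zero-mode}(1) for the sufficiency direction, Proposition~\ref{prop:compression} by contraposition for necessity, and division by $|p(z)p(-z)|^2$ via \eqref{eq:vanishing-compression} to produce a real $f$, where your explicit check of \eqref{eq:f-sym} through the numerator $g$ spells out what the paper summarizes as following from the symmetries of $V$ and $V_0$.
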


\begin{proof}
By \eqref{eq:vanishing-compression}, \(A_V(k)=0\) for every \(k\) if and
only if \(\operatorname{Im}(V(z)p(-z)\overline{p(z)})=0\) almost everywhere
on \(X\).  Since \(V_0\) is nontrivial and real analytic, its zero set has
measure zero.  Away from this set,
\[
 f(z):=\frac{V(z)}{V_0(z)}
 =\frac{V(z)p(-z)\overline{p(z)}}{|p(z)p(-z)|^2}
\]
is real.  The symmetries of \(V\) and \(V_0\), together with
\(fV_0=V\in L^\infty(X)\), give \(f\in\mathcal F_{\rm sym}\) after extending
\(f\) across the zero set.  Thus \(A_V(k)=0\) for every \(k\) implies
\(V=V[f]\).  Conversely, for \(V=V[f]\),
\(V(z)p(-z)\overline{p(z)}=f(z)|p(z)p(-z)|^2\) is real, so
\eqref{eq:vanishing-compression} gives \(A_V(k)=0\) for every \(k\).

Theorem~\ref{thm:exact} gives the kernel inequality for \(V=V[f]\), with the
case \(V[0]\equiv0\) given by Lemma~\ref{lem:chiral-zero-mode}.  Conversely, suppose
the kernel inequality holds.  If \(A_V(k_0)\neq0\) for some \(k_0\), then
\(A_V(k_0)\) is invertible because it is a scalar multiple of \(I_2\).
Proposition~\ref{prop:compression} would then imply that
\(H_{k_0}(\alpha_*,\lambda)\) has no zero eigenvalue for all sufficiently
small nonzero \(\lambda\), a contradiction.  Hence \(A_V(k)=0\) for every
\(k\), completing the proof.
\end{proof}

We now specialize to \(U=U_{\rm BM}\) and \(V=V_{\rm BM}\).  By
\eqref{eq:vanishing-compression}, it suffices to show that
\[
 \operatorname{Im}\!\left(V_{\rm BM}(z)p(-z)\overline{p(z)}\right)
 \not\equiv0.
\]
This function is real analytic and hence is nonzero on an open set once it
is not identically zero; we prove the latter by comparing its Taylor
coefficients at \(z=0\).

\begin{proof}[Proof of Proposition~\ref{prop:standard-BM}]
Suppose, to the contrary, that
\begin{equation}\label{eq:gBM-zero}
 \operatorname{Im}\!\left(
 V_{\rm BM}(z)p(-z)\overline{p(z)}\right)=0
 \quad(z\in X).
\end{equation}

By Lemma~\ref{lem:chiral-zero-mode}, specifically
\eqref{eq:u0-local-factor}, near \(0\) we have
\(\vec u_0(z)=z\vec w(z)\) with \(\vec w(0)\neq0\).  Thus \(p\) vanishes to
first order at \(0\), and the rotation law in \eqref{eq:p-sym} forces its
linear term to be \(cz\), with \(c\neq0\).  Multiplying
the zero mode by a constant does not affect whether \eqref{eq:gBM-zero}
holds, so we normalize this coefficient to be one.  Equations
\eqref{eq:u0} and \eqref{eq:u0-reflection} give
\(\overline{p(-\bar z)}=\rho p(z)\).  After normalization,
\(p(z)=z+O(|z|^2)\), and hence
\[
 \overline{p(-\bar z)}=-z+O(|z|^2),
 \quad
 \rho p(z)=\rho z+O(|z|^2).
\]
Thus \(\rho=-1\).  Write the real-analytic expansion as
\[
 p(z)=\sum_{m,n\geq0}c_{mn}z^m\bar z^n.
\]
The identity \(\overline{p(-\bar z)}=-p(z)\) gives
\[
 \overline{c_{mn}}=(-1)^{m+n+1}c_{mn}.
\]
Thus coefficients of odd total degree \(m+n\) are real, while those of even
total degree are purely imaginary.  In particular, the coefficient of
\(z^4\) can be written as \(\mathrm{i}\beta\), where \(\beta\in\R\).

Put \(K=4\pi/3\) and \(t=\eps\alpha_*/2\).
The finite Fourier sums in \eqref{eq:BMpotentials} give
\begin{align*}
 U_{\rm BM}(z)={}&\frac{3K^2}{2}z
 +\frac{3\mathrm{i}K^3}{8}\bar z^2
 -\frac{3K^4}{16}z^2\bar z
 -\frac{\mathrm{i}K^5}{32}z\bar z^3
 -\frac{\mathrm{i}K^5}{128}z^4+O(|z|^5),\\
 V_{\rm BM}(z)={}&3-\frac{3K^2}{4}z\bar z
 -\frac{\mathrm{i}K^3}{16}(z^3+\bar z^3)
 +\frac{3K^4}{64}z^2\bar z^2\\
 &\hspace{25mm}
 +\frac{\mathrm{i}K^5}{256}(z^4\bar z+z\bar z^4)+O(|z|^6).
\end{align*}
The first row of \(D(\alpha_*)\vec u_0=0\) is
\begin{equation}\label{eq:p-local-equation}
 \partial_{\bar z}p(z)=tU_{\rm BM}(z)p(-z).
\end{equation}
Comparing coefficients in this equation determines every Taylor coefficient
with a positive power of \(\bar z\) from the preceding ones.  Through the
order needed below, this gives
\begin{align}
 p(z)={}&z-\frac{3K^2t}{2}z^2\bar z+\mathrm{i}\beta z^4
 -\frac{\mathrm{i}K^3t}{8}z\bar z^3
 +\frac{3K^4t(12t+1)}{32}z^3\bar z^2 \notag\\
 &+\frac{\mathrm{i}K^2t(K^3+192\beta)}{128}z^5\bar z
 +\frac{\mathrm{i}K^5t(12t+1)}{128}z^2\bar z^4
 +O(|z|^7).\label{eq:p-BM-expansion}
\end{align}

Substitution into \eqref{eq:gBM-zero} gives
\begin{align}
 &\operatorname{Im}\!\left(
 V_{\rm BM}(z)p(-z)\overline{p(z)}\right) \notag\\
 &\quad=-\frac{6K^3t-K^3-48\beta}{16}
 (z^4\bar z+z\bar z^4)\notag\\
 &\quad\quad
 +\frac{K^2(216K^3t^2-12K^3t-K^3-192\beta)}{256}
 (z^5\bar z^2+z^2\bar z^5)+O(|z|^8).
 \label{eq:gBM-expansion}
\end{align}
The degree-five coefficient would have to vanish, giving
\(\beta=K^3(6t-1)/48\).  With this value of \(\beta\), the degree-seven
coefficient becomes
\[
 \frac{3K^5}{256}(72t^2-12t+1)
 =\frac{3K^5}{256}
 \left(72\left(t-\frac1{12}\right)^2+\frac12\right)>0,
\]
since \(t\) is real.  This contradicts \eqref{eq:gBM-zero}.  Therefore
\(a_{V_{\rm BM}}\) is not identically zero.
\end{proof}

Together with Proposition~\ref{prop:compression} and
Theorem~\ref{thm:discrete-couplings}, this proves the standard-BM
conclusions stated in Sec.~\ref{sec:model}.

\section{Discussion and future directions}\label{sec:discussion}

At a fixed simple magic parameter \(\alpha_*\), we have constructed a family
of admissible nonchiral potentials \(V[f]=fV_0\) for which at least two
zero-energy states remain at every Bloch momentum for every real nonchiral
coupling \(\lambda\).  The line-bundle and index argument proves this exact
flatness without a small-coupling assumption, while
Theorem~\ref{thm:characterization} shows that every admissible potential
with this property is of this form.   In the opposite direction, the
invertibility criterion obstructs exact flatness, and for the standard
first-harmonic BM potential we established that exact flatness can not occur except for at most a discrete set of nonchiral couplings. 

We have characterized nonchiral tunneling potentials for which the flat bands survive for every \(\lambda\)  at the magic parameter \(\alpha_*\).  It does not classify potentials that produce exact \textcolor{black}{flatness} at one isolated nonzero coupling, nor does it address exact flatness at a value of \(\alpha\) different from \(\alpha_*\).  A complete  characterization of exactly flat bands  in the two-parameter plane \((\alpha,\lambda)\) remains to be understood.  For the standard BM model, it is natural to ask whether the discrete exceptional set in Theorem~\ref{thm:discrete-couplings} can be removed, so that exact flatness is excluded for every \(\lambda\neq0\).  It would also be
useful to extend the criterion and the standard-BM conclusion to
higher-order continuum models \cite{QuinnKongLuskinWatson}, relaxed
tunnelling potentials \cite{NamKoshino}, and models incorporating strain.

The potentials \(V[f]\) that lead to nonchiral exact flatness are defined at the continuum level.  A microscopic realization would require an atomistic interlayer hopping model, together with the stacking geometry and lattice relaxation, to produce a same-sublattice tunnelling potential satisfying
\(V(z)=f(z)p(z)\overline{p(-z)}\).  It is therefore interesting to understand whether this could ever possibly arise from realistic microscopic configurations, or whether the exact flatness is a purely mathematical construction.

\appendix
\section{Numerical illustration}\label{app:numerics}

We numerically test Theorem~\ref{thm:exact} for \(V=V_0\), using different
plane-wave cutoffs to construct the potential and evaluate the Hamiltonian.

\subsection{Numerical setup}

At cutoff \(N\), the reciprocal-lattice momenta are
\[
 G=mB_1+nB_2,\quad |m|,|n|\leq N.
\]
We use two cutoffs.  At the \emph{construction cutoff} \(N_{\rm s}\), we
compute a normalized chiral zero mode at \(k=0\) and obtain the Fourier
coefficients of \(V_0(z)=p(z)\overline{p(-z)}\) from its first component by
discrete convolution, normalized by \(\sum_G|\widehat V_0(G)|^2=1\).  Keeping
this \(V_0\) fixed, we diagonalize a separately assembled Hamiltonian at the
\emph{evaluation cutoff} \(N_{\rm t}=N_{\rm s}+1\); \(V_0\) is not recomputed
at \(N_{\rm t}\).  Thus flatness is tested in a larger basis than the one used
to construct the potential, checking against a same-truncation identity.

For the calculation shown below, \(N_{\rm s}=6\), \(N_{\rm t}=7\),
\(\alpha_*\approx0.585663558390\), and \(\lambda=1\).  The smallest singular
value of the source chiral matrix is \(2.07\times10^{-14}\).

\subsection{Results}

We sample the momentum parallelogram \(k=sB_1+tB_2\),
\(-\tfrac12\leq s,t\leq\tfrac12\), on a \(13\times13\) grid and also sample
364 points along \(\Gamma\)--\(K\)--\(M\)--\(\Gamma\), where
\(\Gamma=0\), \(K=(2B_1-B_2)/3\), and \(M=B_1/2\).  If \(E_{+1}(k)\) and
\(E_{-1}(k)\) are the two central eigenvalues, set
\[
 r(k):=\max\{|E_{+1}(k)|,|E_{-1}(k)|\}.
\]
Across all sampled momenta, \(r(k)\leq2.61\times10^{-11}\), while the nearest
neighboring band remains at least \(2.08\) away from zero.
Figure~\ref{fig:band-surface} displays \(r(k)\) over the Bloch-torus grid and
the ten bands along the high-symmetry path.

The convergence in Table~\ref{tab:numerical-convergence} under successive
source and target cutoffs checks against a same-truncation algebraic artifact.

\begin{table}[H]
\centering
\begin{tabular}{c c c c}
\hline
\(N_{\rm s}\) & \(N_{\rm t}\) & target dimension & \(\max_k r(k)\)\\
\hline
3 & 4 & 324 & \(1.97\times10^{-5}\)\\
4 & 5 & 484 & \(5.87\times10^{-8}\)\\
5 & 6 & 676 & \(1.23\times10^{-9}\)\\
6 & 7 & 900 & \(2.61\times10^{-11}\)\\
\hline
\end{tabular}
\caption{Cross-cutoff convergence of the maximum central-band residual over
the Bloch-torus grid and high-symmetry path.}
\label{tab:numerical-convergence}
\end{table}

\begin{figure}[H]
\centering
\includegraphics[width=\linewidth]{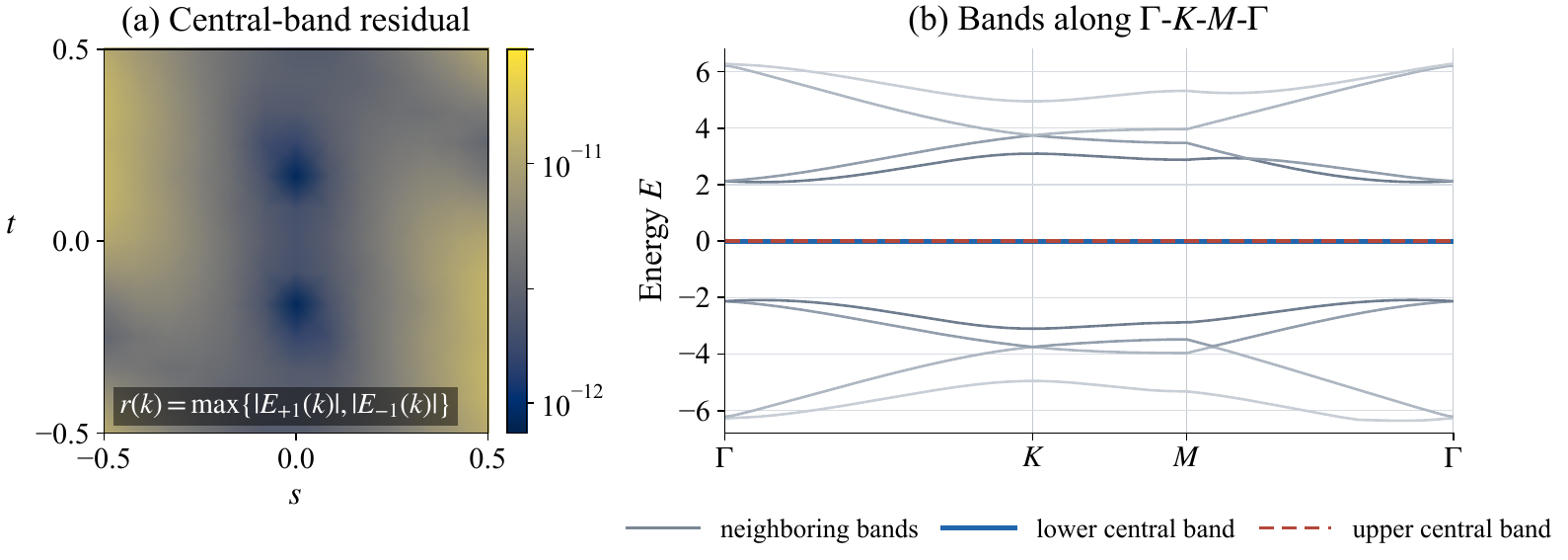}
\caption{Numerical check for \(V_0\).  Panel (a) shows the central-band
residual \(r(k)=\max\{|E_{+1}(k)|,|E_{-1}(k)|\}\) on the \(13\times13\)
Bloch-torus grid.  Panel (b) shows the
ten central bands along \(\Gamma\)--\(K\)--\(M\)--\(\Gamma\); the solid blue
and dashed red curves are the two central bands and coincide at the scale of
the plot.  The potential is constructed at cutoff \(N_{\rm s}=6\) and
evaluated at cutoff \(N_{\rm t}=7\).}
\label{fig:band-surface}
\end{figure}

\section*{Data availability}

The numerical data supporting Figures~\ref{fig:v0} and
\ref{fig:band-surface} and Table~\ref{tab:numerical-convergence} are
reproducible using the code \texttt{nonchiral\_flat\_band.py} accompanying
the paper.

\section*{AI usage statement}

{ ChatGPT and Codex were used to assist with suggesting proof strategies, preparing the \LaTeX{} manuscript, and verifying correctness.}  The authors take full responsibility for the content of the manuscript.

We also used Codex to assist in searching for counterexamples to Open Problem 3 of \cite{ZworskiSurvey}, motivated by recent successes in using AI tools to find counterexamples to mathematical conjectures. In our case, however, Codex repeatedly questioned whether such counterexamples should exist and offered various plausible physical heuristics against them throughout the entire process. 

At that point, most of the argument in Section~\ref{sec:construction} was already in place: we had identified the invertibility criterion as a natural obstruction to flatness, but saw no reason to expect it to hold for every admissible potential. { The authors therefore judged that a counterexample remained very plausible and were not moved by Codex's judgment.}  This experience serves as a reminder that AI-assisted mathematical exploration still depends crucially on human judgment and the correct intuition, both mathematical and physical, without which we cannot expect to make progress.  
\bibliographystyle{unsrt}
\IfFileExists{exact_flat_bands_beyond_chiral_limit.bib}
  {\bibliography{exact_flat_bands_beyond_chiral_limit}}
  {\bibliography{outputs/pdf/exact_flat_bands_beyond_chiral_limit}}

\end{document}